\def\EE{\mathbb E}
\def\Tr{\mathrm{Tr}}
\def\SNR{\mathrm{SNR}}
\newtheorem{stat}{Statement}[section]
\newtheorem{thm}[stat]{Theorem}
\newtheorem{lem}[stat]{Lemma}
\newtheorem{prop}[stat]{Proposition}
\begin{document}

\title{Communication Tradeoffs in Wireless Networks} % not sure about this title!!!
\author{Serj Haddad, Olivier L\'ev\^eque\\Information Theory Laboratory\\ School of Computer and Communication Sciences\\ EPFL. 1015 Lausanne, Switzerland}
\maketitle
\thispagestyle{plain}
\pagestyle{plain}

\begin{abstract}
We characterize the maximum achievable broadcast rate in a wireless network under various fading assumptions. Our result exhibits a duality between the performance achieved in this context by collaborative beamforming strategies and the number of degrees of freedom available in the network.
\end{abstract}

\begin{IEEEkeywords}
wireless networks, broadcast capacity, low SNR communications, beamforming strategies, random matrices 
\end{IEEEkeywords}

%%%%%%%%%%%%%%%%%%%%%%%%
\section{Introduction}
There is a vast body of literature on the subject of multiple-unicast communications in ad hoc wireless networks. Because of the inherent broadcast nature of wireless signals, managing the {\em interference} between the multiple source-destination pairs is a key issue and has led to various interesting proposals \cite{GuptaUnicast,AyferUnicast, ADT11, NGS09,CJ08,RCY00,NGJV12,MOMK14}. In some of these works, it appeared that the model considered for the fading environment may substantially impact the performance of the proposed communication schemes (see \cite{FMM09}). In particular, the {\em channel diversity}, both spatial and temporal, turns out to be a key parameter for the analysis of the various schemes.

In the present paper, we address an a priori much easier scenario (previously considered in \cite{GastparBC}). Instead of every source node willing to communicate each to a different destination node, we consider the {\em broadcast scenario}, where each source node wishes to send some piece of information to all the other nodes in the network. This situation is to be encountered e.g. when control signals carrying channel state information should be broadcasted to the whole network.  In this context, the broadcast nature of the wireless medium can only help relaying communications, so that the situation seems simpler to handle, if not trivial. What we show in the following is that even in this simpler scenario, the optimal communication performance highly depends on the nature of the wireless medium. The conclusions we draw put again channel diversity to the forefront. But whereas diversity was beneficial for establishing multiple parallel communication channels in the multiple-unicast scenario, it turns out that in the present case, diversity is on the contrary detrimental to a proper broadcasting of information. A {\em duality} is further established between the number of {\em degrees of freedom} available for multi-party communications and the {\em beamforming gain} of broadcast transmissions, which allows for a better dissemination of information. At one end, in a rich scattering environment, degrees of freedom are prominent, while beamforming is practically infeasible. At the other end, degrees of freedom become a scarce resource, while high beamforming gains can be achieved via collaborative transmissions.

Our analysis relies on the simplistic line-of-sight fading model for signal attenuation over distance, where signal amplitude attenuation is inversely proportional to distance and phase shifts are also proportional to distance. Yet, this model, along with another parameter characterizing the {\em sparsity} of the network, allows to capture the different regimes mentioned above and to characterize the performance trade-offs. In addition, we would like to highlight here that despite the simplicity of the model, the mathematical analysis needed to establish the result on the maximum achievable broadcast rate in the network requires a precise and careful study of the spectral norm of unconventional random matrices, rarely studied in the mathematical literature.

%%%%%%%%%%%%%%%%%%%%%%%%
%%%%%%%%%%%%%%%%%%%%%%%%
%%%%%%%%%%%%%%%%%%%%%%%%
\section{Model} \label{sec:model}
There are $n$ nodes uniformly and independently distributed in a square of area $A=n^{\nu},~\nu > 0$. Every node wants to broadcast a different message to the whole network, and all nodes want to communicate at a common {\em per user} data rate $r_n$ bits/s/Hz. We denote by $R_n = n \, r_n$ the resulting {\em aggregate} data rate and will often refer to it simply as ``broadcast rate'' in the sequel. The broadcast capacity of the network, denoted as $C_n$, is defined as the maximum achievable aggregate data rate $R_n$. We assume that communication takes place over a flat fading channel with bandwidth $W$ and that the signal $Y_j[m]$ received by the $j$-th node at time $m$ is given by
$$
Y_j[m] = \sum_{k \in {\mathcal T}} h_{jk} \, X_k[m] + Z_j[m],
$$
where $\mathcal T$ is the set of transmitting nodes, $X_k [m]$ is the signal sent at time $m$ by node $k$ and $Z_j [m]$ is additive white circularly symmetric Gaussian noise (AWGN) of power spectral density $N_0/2$ Watts/Hz. We also assume a common average power budget per node of $P$ Watts, which implies that the signal $X_k$ sent by node $k$ is subject to an average power constraint $\EE(|X_k|^2) \le P$. In line-of-sight environment, the complex baseband-equivalent channel gain $h_{jk}$ between transmit node $k$ and receive node $j$ is given by
\begin{equation} \label{eq:model}
h_{jk} = \sqrt{G} \; \dfrac{\exp(2 \pi i r_{jk} / \lambda)}{r_{jk}},
\end{equation}
where $G$ is Friis' constant, $\lambda$ is the carrier wavelength, and $r_{jk}$ is the distance between node $k$ and node $j$. Let us finally define
$$
\SNR_s=\frac{GP}{N_0 W}n^{1-\nu},
$$
which is the SNR available for a communication between two nodes at distance $n^{\frac{\nu-1}{2}}$ in the network.

We focus in the following on the low SNR regime, by which we mean that $\SNR_s=n^{-\gamma}$ for some constant $\gamma>0$. This means that the power available at each node does not allow for a constant rate direct communication with a neighbor. This could be the case e.g., in a sensor network with low battery nodes, or in a sparse network (large $\nu$) with long distances between neighboring nodes.

In order to simplify notation, we choose new measurement units such that $\lambda=1$ and $G/(N_0 W)=1$ in these units. This allows us to write in particular that $\SNR_s=n^{1-\nu}P$.

%%%%%%%%%%%%%%%%%%%%%%%%
%%%%%%%%%%%%%%%%%%%%%%%%
%%%%%%%%%%%%%%%%%%%%%%%%
\section{Main result}
Before stating our main contribution, let us recall what is known for the multiple-unicast scenario \cite{JSAC}. In this case, the aggregated network throughput
scales as\footnote{up to logarithmic factors}
$$
T_n \sim \begin{cases} n \; \SNR_s & \text{if } (A/\lambda^2) \ge n^2\\ (\sqrt{A}/\lambda) \; \SNR_s & \text{if }n \le (A/\lambda^2) \le n^2\\ \sqrt{n} \; \SNR_s & \text{if } 1 \le  A/\lambda^2 \le n \end{cases}.
$$
Such an aggregate throughput is achieved by a hierarchical coooperative strategy involving network-wide distributed MIMO transmissions in the first two cases, while a simple multi-hopping strategy achieves the performance claimed in the third regime.

We therefore see that the wider the area is, the more degrees of freedom are available for communication in the network. The case where $A \sim n^2$ (corresponding to a sparse network of density $O(1/n)$) models the case where the phase shifts are large enough to ensure sufficient channel diversity and full degrees of freedom of MIMO transmissions. On the contrary, in the regime where $A \sim n$ (corresponding to a network of constant density), and even though this may seem surprising at first sight, phase shifts do not allow for efficient MIMO transmissions, so that multi-hopping becomes the best way to transfer information across the network.

A totally different scenario awaits us in the broadcast case. Our main result is the following: the aggregate broadcast rate scales as
$$
R_n \sim \begin{cases}  \min\{\SNR_s, 1\} & \text{if } (A/\lambda^2) \ge n^2\\ \min \left\{ \bigg( \dfrac{n}{\sqrt{A}/\lambda} \bigg) \,  \SNR_s, 1 \right\} & \text{if } 1 \le (A/\lambda^2) \le n^2 \end{cases}
$$
and is achieved by a simple broadcast transmission in the first case and by a multi-stage beamforming strategy in the second case. The performance is further capped at 1, which means that such beamforming gains can only be obtained at low SNR.

We see here for a sparse network of density $O(1/n)$  (regime where $A \sim n^2$), no particular beamforming gain can be obtained, while the beamforming gain increases as the network gets denser and denser. Let us mention here that the result where the network is of constant density ($A \sim n$) has been previously established in \cite{HLarxiv1}.

A final observation shows the duality of the two previous results: in the regime where $A/\lambda^2 \ge n$ (that is, for networks of constant density or sparser) and at low SNR, we have
$$
\frac{T_n}{\SNR_s} \; \frac{R_n}{\SNR_s} = n
$$
which captures the fact that high beamforming gains can only be obtained at the expense of a reduced number of degrees of freedom (or reciprocally).

%%%%%%%%%%%%%%%%%%%%%%%%%%
%%%%%%%%%%%%%%%%%%%%%%%%
%%%%%%%%%%%%%%%%%%%%%%%%
%%%%%%%%%%%%%%%%%%%%%%%%
\section{Broadcasting Strategies in Different Regimes} \label{sec:scheme}
First note that under the LOS model \eqref{eq:model} and the assumptions made in the Section \ref{sec:model}, a simple time division scheme achieves a broadcast (aggregate) rate $R_n$ of order $\min(\SNR_s,1)$. Indeed, a rate of order $1$ is obviously achieved at high SNR\footnote{We coarsely approximate $\log P$  by $1$ here!}. At low SNR (i.e.~when $\SNR_s \sim n^{-\gamma}$ for some $\gamma>0$), each node can spare power while the others are transmitting, so as to compensate for the path loss of order $1/n^{\nu}$ between the source node and other nodes located at distance at most $\sqrt{2n^{\nu}}$, leading to a broadcast rate of order $R_n \sim \log(1+ n P/n^{\nu}) \sim n^{1-\nu}P=\SNR_s$. 

In the following, we will see that, at low SNR, while the described simple TDMA based broadcast scheme is order-optimal for $A \ge n^2$, it is not optimal for sparse networks with area $A<n^2$ ($\nu<2$) (for simplicity, as stated in Section \ref{sec:model}, we take $\lambda=1$). On the other hand, the back-and-forth beamforming scheme, presented in \cite{HL15}, proves to be order-optimal for $A \le n^2$. 

As described in \cite{HL15}, the back-and-forth beamforming scheme involves source nodes taking turns to broadcast their messages. Each transmission is followed by a series of network-wide back-and-forth transmissions that reinforce the strength of the signal, so that at the end, every node is able to decode the message sent from the source. The reason why back-and-forth transmissions are useful for \textbf{small area networks}/\textbf{dense networks} is that in line-of-sight environment, nodes are able to (partly) align the transmitted signals so as to create a significant beamforming gain for each transmission (whereas this would not be the case in \textbf{high scattering environment}/\textbf{sparse networks} with i.i.d.~fading coefficients). In short, the back-and-forth beamforming scheme is split into two phases:\\

\textbf{Phase 1. Broadcast Transmission.} The source node broadcasts its message to the whole network. All the nodes receive a noisy version of the signal, which remains undecoded. This phase only requires one time slot.\\

\textbf{Phase 2. Back-and-Forth Beamforming with Time Division.} Upon receiving the signal from the broadcasting node, nodes start multiple back-and-forth beamforming transmissions between the two halves of the network to enhance the strength of the signal. Although this simple scheme probably achieves the optimal performance claimed in Theorem \ref{thm:BF_TDMA} below, we lack the analytical tools to prove it. For this reason, we propose a time-division strategy, where clusters of size $M=\frac{n^{\nu/4}}{2c_1}\times \frac{n^{\nu/2}}{4}$ and separated by horizontal distance $d=\frac{n^{\nu/2}}{4}$ pair up for the back-and-forth transmissions. During each transmission, there are $\Theta\left(n^{\nu/4-\epsilon}\right)$ cluster pairs operating in parallel, so $\Theta(n^{1-\epsilon})$ nodes are communicating in total. The number of rounds needed to serve all nodes must therefore be $\Theta(n^{\epsilon})$.

\begin{figure}[t]
\centering
\includegraphics[scale=0.5]{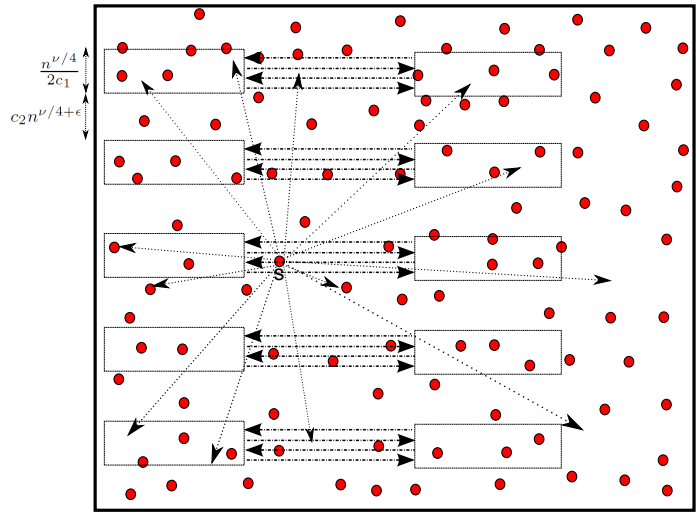}
\caption{$\sqrt{n}\times \sqrt{n}$ network divided into clusters of size $M=\frac{n^{\nu/4}}{2c_1}\times \frac{n^{\nu/2}}{4}$.
Two clusters of size $M$ placed on the same horizontal line and separated by distance $d=\frac{{n^{\nu/2}}}{4}$ pair up and start back-and-forth beamforming. The vertical separation between adjacent cluster pairs is $c_2n^{\nu/4+\epsilon}$.}
\label{fig:1}
\end{figure}  

After each transmission, the signal received by a node in a given cluster is the sum of the signals coming from the facing cluster, of those coming from other clusters, and of the noise. We assume a sufficiently large vertical distance $c_2 n ^{\nu/4+\epsilon}$ separating any two cluster pairs. We show below that the broadcast rate between the operating clusters is $\Theta(n^{2-\frac{3\nu}{2}}P)=\Theta(n^{1-\frac{\nu}{2}}\SNR_s)$. %????
Since we only need $\Theta(n^{\epsilon})$ number of rounds to serve all clusters, phase 2 requires $\Theta(n^{-2+\frac{3\nu}{2}+\epsilon}P^{-1})$ time slots. As such, back-and-forth beamforming achieves a broadcast rate of $\Theta(n^{2-\frac{3\nu}{2}-\epsilon}P)=\Theta(n^{1-\frac{\nu}{2}-\epsilon}\SNR_s)$ bits per time slot. In view of the described scheme, we are able to state the following result.

\begin{thm} \label{thm:BF_TDMA}
For any $\epsilon>0$, $0<\nu<2$, and $P=O(n^{-2+\frac{3\nu}{2}})$, the following broadcast rate
\begin{align*}
R_n=\Omega\left( n^{2-\frac{3\nu}{2}-\epsilon} P\right)=\Omega\left( n^{1-\frac{\nu}{2}-\epsilon} \SNR_s\right)
\end{align*}
is achievable with high probability\footnote{that is, with probability at least $1-O\left(\frac{1}{n^p}\right)$ as $n \to \infty$, where the exponent $p$ is as large as we want.} in the network. As a consequence, when $P = \Omega(n^{-2+\frac{3\nu}{2}})$, a broadcast rate $R_n = \Omega(n^{-\epsilon})$ is achievable with high probability.
\end{thm}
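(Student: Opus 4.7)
The plan is to follow the back-and-forth beamforming construction described above and verify the three ingredients it relies on. First, show that one round of back-and-forth beamforming between a single pair of facing clusters achieves a cluster-to-cluster rate of $\Omega(n^{2-3\nu/2} P)$ per time slot. Second, check that the $\Theta(n^{\nu/4-\epsilon})$ cluster pairs scheduled in parallel --- separated vertically by at least $c_2 n^{\nu/4+\epsilon}$ --- interfere with each other only sub-dominantly. Third, bookkeep the $\Theta(n^\epsilon)$ rounds needed to cover all nodes and the time-sharing over the $n$ sources, so that $R_n = n \, r_n = \Omega(n^{2-3\nu/2-\epsilon} P)$ falls out of the per-pair rate.

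The heart of the argument is the first step. Each of the two facing clusters contains $M = \Theta(n^{1-\nu/4})$ nodes, and the $M \times M$ channel matrix $H$ between them has entries $h_{jk} = e^{2\pi i r_{jk}}/r_{jk}$ with $r_{jk} = d + O(n^{\nu/4})$, so $|h_{jk}| = \Theta(1/d)$ and $\|H\|_F^2 = \Theta(M^2/d^2) = \Theta(n^{2-3\nu/2})$. The back-and-forth protocol is essentially a power iteration on $H^* H$: after a logarithmic number of ping-pong transmissions, the transmit and receive vectors at the two clusters concentrate on the top singular direction of $H$, so that the signal at a receiver adds up coherently with an effective gain of $\|H\|^2$ per unit power. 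It therefore suffices to prove the lower bound $\|H\|^2 = \Omega(M^2/d^2) = \Omega(n^{2-3\nu/2})$ with high probability over the node positions --- this is the technical content inherited from \cite{HL15}. Since the hypothesis $P = O(n^{-2+3\nu/2})$ keeps $\|H\|^2 P = O(1)$, the per-receiver SNR is low and $\log(1+\|H\|^2 P) \sim \|H\|^2 P$ delivers the claimed per-slot cluster-to-cluster rate $\Omega(n^{2-3\nu/2}P)$.

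For the second step, the off-pair channel matrices have entries of order at most $1/(c_2 n^{\nu/4+\epsilon})$ and subtend a narrower angular aperture at each receiver, so their spectral norms fall below $\|H\|$ by a factor $n^{-\Omega(\epsilon)}$; summing over the $\Theta(n^{\nu/4-\epsilon})$ parallel pairs still leaves aggregate interference strictly below signal with high probability. The additive noise, uncorrelated across rounds, does not collect a beamforming gain and stays $O(1)$ per receiver, which together with the previous bounds yields the per-pair SINR invoked above. The third step is then a bookkeeping: each source needs $\Theta(n^{-2+3\nu/2}P^{-1})$ slots per round and $\Theta(n^\epsilon)$ rounds to serve all clusters, and time-sharing over $n$ sources cancels exactly the factor $n$ in $R_n = n r_n$, producing $R_n = \Omega(n^{2-3\nu/2-\epsilon} P)$. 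The main obstacle, as highlighted in the introduction, is establishing the lower bound on the top singular value of the LOS channel matrix $H$ together with the matching upper bounds on the spectral norms of the interference matrices, both with high probability over the uniform random node placement --- this is the unconventional random matrix problem at the core of the argument. The final assertion, that $P = \Omega(n^{-2+3\nu/2})$ already yields $R_n = \Omega(n^{-\epsilon})$, follows by running the same scheme with reduced power $P' = n^{-2+3\nu/2}$.
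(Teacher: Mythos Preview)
Your three-step architecture matches the paper's, but the mechanisms you invoke at the two technical steps are not what the paper does, and one of your claims is factually wrong. The scheme is \emph{not} power iteration on $H^*H$: each node $k$ applies a fixed, deterministic phase compensation $e^{-2\pi i x_k}$ (its own horizontal coordinate) before amplifying and forwarding. The key achievability lemma (Lemma~\ref{lem:Full_Beamforming}) is purely geometric: because the cluster width $n^{\nu/4}/(2c_1)$ is small compared to the separation $d=n^{\nu/2}/4$, one has $0 \le r_{jk}-x_k-x_j-d \le 1/(2c_1^2)$, so $\bigl|\sum_{k\in\mathcal{T}_i} e^{2\pi i(r_{jk}-x_k)}/r_{jk}\bigr| \ge K_1 Mn^{1-\nu}/d$ directly. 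There is no random-matrix lower bound on $\|H\|$ to establish; the explicit vector already achieves the coherent gain. The ``unconventional random matrix problem'' you cite from the introduction refers to the \emph{upper} bound on $\|H\|$ (Proposition~\ref{prop:ub_H}), used only for the converse in Section~\ref{sec:opt}, not for this theorem.

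Your interference argument contains a concrete error: the off-pair channel entries are \emph{not} of order $1/(c_2 n^{\nu/4+\epsilon})$. The vertical separation is $c_2 n^{\nu/4+\epsilon}$, but the horizontal distance to an interfering transmit cluster is still $\Theta(d)=\Theta(n^{\nu/2})$, so $|h_{jk}|=\Theta(1/d)$ is the same as for the in-pair entries. Interference is small because the \emph{phases} fail to align: Lemma~\ref{lem:Interference} bounds $\EE[\cos(2\pi(r_{jk}-x_k))/r_{jk}]$ via integration by parts in $y_k$, exploiting $|\partial r_{jk}/\partial y_k| \ge l\,c_2\,n^{-\nu/4+\epsilon}$ for the $l$-th interfering cluster, and then controls the deviation with Hoeffding. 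Finally, the noise does not simply ``stay $O(1)$'': it is amplified by the same factor $A$ at every round but only adds incoherently (gain $\sqrt{N_C Mn^{1-\nu}}/d$ versus $Mn^{1-\nu}/d$ for the signal), and the paper tracks this through $t$ rounds to show the accumulated noise power is $O(t)$ precisely because Phase~1 already delivers $\SNR=\Omega(n^{\nu/2-1}) \ge N_C/(Mn^{1-\nu})$.
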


Before proceeding with the proof of the theorem, the following lemma provides an upper bound on the probability that the number of nodes inside each cluster deviates from its mean by a large factor. The proof is provided in the Appendix.

\begin{lem} \label{lem:cluster_size}
Let us consider a cluster of area $M$ with $M=n^\beta$ for some $\nu-1 < \beta <\nu$. The number of nodes inside each cluster is then between $((1-\delta)Mn^{1-\nu},\,(1+\delta)Mn^{1-\nu})$ with probability larger than $1-\frac{n^{\nu}}{M}\exp(-\Delta(\delta)Mn^{1-\nu})$ where $\Delta(\delta)$ is independent of $n$ and satisfies $\Delta(\delta)>0$ for $\delta>0$.
\end{lem}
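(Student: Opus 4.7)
The plan is to recognize that, for each fixed cluster of area $M$, the number $X$ of nodes falling inside it is $\mathrm{Binomial}(n,p)$ with success probability $p = M/A = M n^{-\nu}$, since the $n$ nodes are placed independently and uniformly in the square of area $A = n^\nu$. Its mean is therefore $\mu := M n^{1-\nu}$, which tends to infinity because the hypothesis $\beta > \nu - 1$ forces $\mu = n^{\beta + 1 - \nu}$ to have a strictly positive exponent. This is precisely what makes a sharp concentration bound meaningful here.

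The main tool I would use is the multiplicative Chernoff bound for binomial random variables, which yields, for any $0 < \delta < 1$,
\[
\Pr\bigl(X \notin ((1-\delta)\mu, (1+\delta)\mu)\bigr) \;\le\; 2\exp\bigl(-\Delta(\delta)\,\mu\bigr),
\]
for some constant $\Delta(\delta) > 0$ depending only on $\delta$ (one may take $\Delta(\delta) = \delta^2/3$ from the standard upper- and lower-tail Chernoff estimates).

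Next I would tile the square of area $A = n^\nu$ by the clusters under consideration; since $\beta < \nu$, there are exactly $n^{\nu}/M$ such clusters. Applying a union bound over this family, and absorbing the harmless factor of $2$ by slightly decreasing $\Delta(\delta)$, gives
\[
\Pr\bigl(\text{some cluster has count outside } ((1-\delta)\mu, (1+\delta)\mu)\bigr) \;\le\; \frac{n^{\nu}}{M}\exp\bigl(-\Delta(\delta)\, M n^{1-\nu}\bigr),
\]
which is exactly the bound claimed in the lemma.

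I do not expect any substantive obstacle: the nodes are i.i.d.~uniform by assumption, so the binomial structure is exact. The only point requiring a moment of care is to invoke the \emph{multiplicative} Chernoff bound rather than the additive Hoeffding form, so that the exponent scales as $\delta^2 \mu = \delta^2 M n^{1-\nu}$ and not as $\delta^2 n$; otherwise the exponent in the statement would be far too pessimistic in the small-cluster regime. The constraint $\beta > \nu - 1$ is what guarantees $\mu \to \infty$ and hence makes the resulting tail bound genuinely decay.
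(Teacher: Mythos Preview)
Your proposal is correct and follows essentially the same route as the paper: the authors also model the cluster count as a sum of $n$ i.i.d.\ Bernoulli$(M/n^{\nu})$ variables, apply the exponential Markov/Chernoff argument (deriving it from the MGF rather than citing it, which yields the slightly sharper exponent $\Delta_+(\delta)=(1+\delta)\log(1+\delta)-\delta$ in place of your $\delta^2/3$), and conclude via the union bound over the $n^{\nu}/M$ clusters.
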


As shown in Fig. \ref{fig:1}, two clusters of size $M=\frac{n^{\nu/4}}{2c_1}\times \frac{n^{\nu/2}}{4}$ placed on the same horizontal line and separated by distance $d=\frac{{n^{\nu/2}}}{4}$ form a cluster pair. During the back-and-forth beamforming phase, there are many cluster pairs operating simultaneously. Given that the cluster width is $\frac{n^{\nu/4}}{2 c_1}$ and the vertical separation between adjacent cluster pairs is $c_2n^{\nu/4+\epsilon}$, there are
$$
N_C=\frac{n^{\nu/2}}{\frac{n^{\nu/4}}{2 c_1}+c_2n^{\nu/4+\epsilon}} = \Theta \left( n^{\nu/4-\epsilon} \right)
$$
cluster pairs operating at the same time. Let $\mathcal{R}_i$ and $\mathcal{T}_i$ denote the receiving and the transmitting clusters of the $i$-th cluster pair, respectively.

Two key ingredients for analyzing the multi-stage back-and-forth beamforming scheme are given in Lemma \ref{lem:Full_Beamforming} and Lemma \ref{lem:Interference}. The proofs are presented in the Appendix. 

\begin{lem} \label{lem:Full_Beamforming}
The maximum beamforming gain between the two clusters of the $i$-th cluster pair can be achieved by using a compensation of the phase shifts at the transmit side which is proportional to the horizontal positions of the nodes. More precisely, there exist a constant $c_1>0$ (remember that $c_1$ is inversely proportional to the width of cluster $i$) and a constant $K_1>0$ such that the magnitude of the received signal at node $j \in \mathcal{R}_i$ is lower bounded with high probability by
$$
\left|\sum_{k\in \mathcal{T}_i} \frac{\exp(2\pi i (r_{jk}-x_k))}{r_{jk}}\right| \ge K_1 \frac{Mn^{1-\nu}}{d},
$$
where $x_k$ denotes the horizontal position of node $k$.
\end{lem}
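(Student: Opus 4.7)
The plan is to lower-bound the sum
$$S_j := \sum_{k\in\mathcal T_i} \frac{e^{2\pi i(r_{jk}-x_k)}}{r_{jk}}$$
in two stages: compute $\EE\,S_j$ under the uniform random placement of nodes, then establish concentration of $S_j$ around this expectation. I would first invoke Lemma~\ref{lem:cluster_size} with $\beta = 3\nu/4 \in (\nu-1,\nu)$ (valid for $0<\nu<2$) so that, with probability at least $1-n^{-p}$, one has $|\mathcal T_i| = (1\pm\delta)Mn^{1-\nu}$ and, conditional on this count, the transmit positions are i.i.d.\ uniform in $\mathcal T_i$.

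For the expectation, set coordinates so that the receive cluster lies to the left of the transmit cluster and let $u = x_k - x_j > 0$, $v = y_k - y_j$. Then
$$r_{jk} - x_k = -x_j + \bigl(\sqrt{u^2+v^2}-u\bigr),$$
so the compensation $e^{-2\pi i x_k}$ performs beamforming along the $-x$ direction toward the receive cluster, and the phase equals the constant $-2\pi x_j$ on the ``axis of coherence'' $v=0$. The expectation $\EE\,S_j$ equals $Mn^{1-\nu}/(WH)$ times the oscillatory integral $\int_{\mathcal T_i} e^{2\pi i(r_{jk}-x_k)}/r_{jk}\,du\,dv$, and I would evaluate this integral by a stationary-phase argument adapted to a degenerate critical set (the whole line $v=0$ rather than an isolated point): use the transverse Hessian $\partial_v^2 \phi|_{v=0}=1/u$ to handle the Fresnel-type integral in $v$, then integrate along the critical line across the cluster width in $u$. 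The free parameter $c_1$, chosen large enough so that the cluster width $n^{\nu/4}/(2c_1)$ is small on the relevant geometric scale (``inversely proportional to the cluster width'' in the statement), is what guarantees that the coherent contribution delivers $|\EE\,S_j|$ of the announced order $Mn^{1-\nu}/d$.

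The deviation $|S_j - \EE\,S_j|$ is then controlled by a Bernstein-type concentration bound: each summand is bounded in magnitude by $1/d$ with variance of order $1/d^2$, so the deviation is at most $\sqrt{Mn^{1-\nu}}(\log n)/d$ with probability at least $1-n^{-p}$, of strictly smaller order than the expectation since $Mn^{1-\nu}\to\infty$ (given $M = \Theta(n^{3\nu/4})$ and $\nu<2$). A union bound over $j \in \mathcal R_i$ extends the conclusion to all receivers simultaneously.

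The main obstacle will be the stationary-phase estimate in the second paragraph. The critical set is a one-dimensional line rather than an isolated point, so the standard point stationary-phase formula does not apply directly; one must integrate along the critical line while controlling the Fresnel-type oscillation transverse to it. The interplay between the cluster width $W = n^{\nu/4}/(2c_1)$, the cluster height $H = n^{\nu/2}/4$, the distance $d = n^{\nu/2}/4$, and the Fresnel-zone width $\sqrt{d}$ has to be tracked carefully, and the tuning of $c_1$ is precisely what ensures that the coherent contribution is of the announced order and that the residual oscillatory and boundary terms are strictly subleading.
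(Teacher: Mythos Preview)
Your approach would work, but it is considerably more elaborate than what the paper does, and it misses the key simplification that makes the lemma almost trivial. The paper's argument is entirely deterministic (apart from the cluster-size count): multiplying by the unit scalar $e^{-2\pi i(x_j+d)}$, one has
$$
|S_j| \;=\; \left|\sum_{k\in\mathcal T_i}\frac{e^{2\pi i(r_{jk}-x_k-x_j-d)}}{r_{jk}}\right|
\;\ge\; \sum_{k\in\mathcal T_i}\frac{\cos\!\big(2\pi(r_{jk}-x_k-x_j-d)\big)}{r_{jk}},
$$
and the geometry of the clusters gives $0\le r_{jk}-x_k-x_j-d\le (y_j-y_k)^2/(2d)\le 1/(2c_1^2)$ for \emph{every} pair $(j,k)$. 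Thus each cosine is at least $\cos(\pi/c_1^2)$, which is a fixed positive constant once $c_1$ is chosen large enough, and the bound $|S_j|\ge K_1 Mn^{1-\nu}/d$ follows immediately from $r_{jk}\le 3d$ and the cluster-size lemma.

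In other words, the cluster width $n^{\nu/4}/(2c_1)$ is already \emph{below} the Fresnel scale $\sqrt{d}=n^{\nu/4}/2$ by the factor $c_1$, so the phase varies by at most $O(1/c_1^2)$ across the whole cluster and the ``oscillatory integral'' is not oscillatory at all. Your stationary-phase analysis, transverse Fresnel integral, and Bernstein concentration step are therefore all unnecessary: the sum is coherent term by term, not merely in expectation. What your route buys is robustness---it would still give the right order if the cluster width were comparable to or larger than $\sqrt{d}$---but at the cost of turning a three-line computation into a delicate asymptotic estimate whose ``main obstacle'' you correctly anticipate. The paper exploits the freedom in $c_1$ precisely to avoid that obstacle altogether.
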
 

\begin{lem} \label{lem:Interference}
For every constant $K_2>0$, there exists a sufficiently large separating constant $c_2>0$ such that the magnitude of interfering signals from the simultaneously operating cluster pairs at node $j\in\mathcal{R}_i$ is upper bounded with high probability by
$$
\left|\sum_{\substack{l=1\\ l\neq i}}^{N_C}\sum_{k\in \mathcal{T}_l} \frac{\exp(2\pi i (r_{jk}-x_k))}{r_{jk}}\right| \le K_2\,\frac{Mn^{1-\nu}}{d \, n^{\epsilon}}\,\log n.
$$
\end{lem}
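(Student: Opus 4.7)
My plan is to view the interference
$$
I := \sum_{l \neq i} \sum_{k \in \mathcal{T}_l} \frac{\exp(2\pi i (r_{jk} - x_k))}{r_{jk}}
$$
as a sum of independent bounded complex random variables, where the randomness comes from the i.i.d.\ uniform positions of the nodes inside each cluster. The key geometric observation is that the transmit-phase compensation $e^{-2\pi i x_k}$ is tuned to make $r_{jk}-x_k$ vary slowly across $k \in \mathcal{T}_i$ (which is what Lemma~\ref{lem:Full_Beamforming} exploits), whereas for any $l \neq i$ the vertical offset of at least $|l-i|\,c_2 n^{\nu/4+\epsilon}$ between node $j$ and cluster $\mathcal{T}_l$ forces the residual phase to oscillate rapidly as a function of the $y$-coordinate of $k$, so the terms no longer add coherently. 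I would split $I = \EE[I] + (I-\EE[I])$ and bound both pieces separately, after first conditioning on each cluster containing $\Theta(Mn^{1-\nu})$ nodes via Lemma~\ref{lem:cluster_size} (this holds with polynomially small failure probability).

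For the mean, $\EE[I]$ reduces to a sum over $l \neq i$ of two-dimensional oscillatory integrals of $e^{2\pi i(r(z)-x)}/r(z)$ on rectangular cluster regions, weighted by the node density $n^{1-\nu}$. Since $\partial_y r = (y-y_j)/r$ with $|y - y_j|\geq |l-i|\,c_2 n^{\nu/4+\epsilon}$, integration by parts in the vertical variable produces a boundary contribution of order $1/(|l-i|\,c_1 c_2 n^\epsilon)$ per cluster plus a lower-order bulk term; summing over $l$ and multiplying by the density $n^{1-\nu}$ yields $|\EE[I]| \lesssim (M n^{1-\nu}/d)\,\log n/(c_2 n^\epsilon)$, below the target as soon as $c_2$ is large. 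For the fluctuation, I would apply a Hoeffding/Bernstein-type concentration to the entire sum, treating all $\sum_{l\neq i}|\mathcal{T}_l|$ summands as independent. The variance satisfies $\mathrm{Var}(I)\leq \sum_{l\neq i} |\mathcal{T}_l|/\rho_l^2$, where $\rho_l$ lower-bounds $r_{jk}$ for $k \in \mathcal{T}_l$; the sum splits naturally into clusters with $|l-i|\,c_2 n^{\nu/4+\epsilon}\leq d$ (where $\rho_l \asymp d$ and there are $\Theta(n^{\nu/4-\epsilon})$ such clusters) and the remaining clusters (where $\rho_l \asymp |l-i|\,c_2 n^{\nu/4+\epsilon}$ and $\sum |l-i|^{-2}$ converges). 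A short computation shows that the resulting $\sqrt{\log n\cdot\mathrm{Var}(I)}$ also fits below the target whenever $c_2$ is large.

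Combining the two estimates and taking a union bound over the $\Theta(n)$ receivers $j$ (and over cluster pairs $i$) at the cost of a polynomial factor absorbed by a larger exponent in the concentration tail yields the claimed inequality with $K_2$ controlled by an inverse power of $c_2$, so any prescribed $K_2 > 0$ can be matched. The main obstacle will be making the two-dimensional non-stationary phase argument for $\EE[I]$ fully rigorous: the phase $r(z)-x$ has a large gradient only in $y$, not in $x$, so integration by parts has to be performed only in the vertical variable, and the boundary terms at the top and bottom sides of each cluster rectangle (whose horizontal length is of order $n^{\nu/4}/c_1$) must be shown to decay precisely with the announced rate. A secondary subtlety is that the variance bound has to be computed globally for $I$ rather than per cluster in order to stay tight throughout the full range $0<\nu<2$; summing Hoeffding bounds cluster by cluster would lose a factor of $\sqrt{N_C}$ and be insufficient as $\nu \to 2$.
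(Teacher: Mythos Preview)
Your proposal is correct and follows essentially the same route as the paper: split $I$ into its mean (bounded via integration by parts in the vertical variable, using $|\partial_{y_k} r_{jk}| \gtrsim |l-i|\, c_2\, n^{-\nu/4+\epsilon}$, then summing the resulting $1/|l-i|$ over clusters to produce the $\log n$ factor) and its fluctuation (bounded by a single global Hoeffding inequality over all $\Theta(n^{1-\epsilon})$ interfering transmitters, exactly as you anticipate in your last remark). One simplification relative to your outline: since every interfering node lies at horizontal distance at least $d$ from $j$, the uniform range bound $|X_k|\le 1/d$ already suffices for the concentration step and your near/far cluster split is unnecessary; note also that the resulting fluctuation term is polynomially below the target for every $\nu<2$ irrespective of $c_2$, so it is only the mean that actually forces $c_2$ to be large.
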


\begin{proof}[Proof of Theorem \ref{thm:BF_TDMA}]
The first phase of the scheme results in noisy observations of the message $X$ at all nodes, which are given by
\begin{align*}
Y_k^{(0)}=\sqrt{\SNR_k}\,X+Z_k^{(0)},
\end{align*}
where $\EE(|X|^2)=\EE(|Z_k^{(0)}|^2)=1$ and $\SNR_k$ is the signal-to-noise ratio of the signal $Y_k^{(0)}$ received at the $k$-th node. In what follows, we drop the index $k$ from $\SNR_k$ and only write $\SNR=\min_k\{\SNR_k\}$. Note that it does not make a difference at which side of the cluster pairs the back-and-forth beamforming starts or ends. Hence, assume the left-hand side clusters ignite the scheme by amplifying and forwarding the noisy observations of $X$ to the right-hand side clusters. The signal received at node $j\in\mathcal{R}_i$ is given by
\begin{equation}\label{Yj:1}
Y_j^{(1)} = \sum_{l=1}^{N_C}\sum_{k\in\mathcal{T}_l} \frac{\exp(2\pi i (r_{jk}-x_k))}{r_{jk}} A Y_k^{(0)} + Z_j^{(1)}
\end{equation}
where $A$ is the amplification factor (to be calculated later) and $Z_j^{(1)}$ is additive white Gaussian noise of variance $\Theta(1)$. We start by applying Lemma \ref{lem:Full_Beamforming} and Lemma \ref{lem:Interference} to lower bound
\begin{align*}
\left|\sum_{l=1}^{N_C}\sum_{k\in\mathcal{T}_l} \frac{\exp(2\pi i (r_{jk}-x_k))}{r_{jk}}\right|&\geq \left|\sum_{k\in\mathcal{T}_i} \frac{\exp(2\pi i (r_{jk}-x_k))}{r_{jk}}\right| - \left|\sum_{\substack{l=1\\ l\ne i}}^{N_C}\sum_{k\in\mathcal{T}_l} \frac{\exp(2\pi i (r_{jk}-x_k))}{r_{jk}}\right| \\
&\ge \left(K_1-K_2\frac{\log n}{n^{\epsilon}}\right) \frac{Mn^{1-\nu}}{d}=\Theta\left(\frac{Mn^{1-\nu}}{d}\right).
\end{align*}
For the sake of clarity, we can therefore approximate\footnote{We make this approximation to lighten the notation and make the exposition clear, but needless to say, the whole analysis goes through without the approximation; it just becomes barely readable.} the expression in \eqref{Yj:1} as follows
\begin{align*}
Y_j^{(1)} & = \sum_{l=1}^{N_C}\sum_{k\in\mathcal{T}_l} \frac{\exp(2\pi i (r_{jk}-x_k))}{r_{jk}} A \sqrt{\SNR_k}\,X + \sum_{l=1}^{N_C}\sum_{k\in\mathcal{T}_l} \frac{\exp(2\pi i (r_{jk}-x_k))}{r_{jk}} A Z_k^{(0)} + Z_j^{(1)}\\
& \simeq\frac{A Mn^{1-\nu}}{d} \sqrt{\SNR}\,X + \frac{A\sqrt{N_C Mn^{1-\nu}}}{d} Z^{(0)} + Z_j^{(1)}\\ &=\frac{A Mn^{1-\nu}}{d} \sqrt{\SNR}\,X + \frac{A Mn^{1-\nu}}{d}\sqrt{\frac{N_C}{Mn^{1-\nu}}} Z^{(0)} + Z_j^{(1)},
\end{align*}
where
$$
Z^{(0)}=\frac{d}{\sqrt{N_C  Mn^{1-\nu}}} \sum_{l=1}^{N_C} \sum_{k\in\mathcal{T}_l} \frac{\exp(2\pi i (r_{jk}-x_k))}{r_{jk}} Z_k^{(0)}.
$$
Note that $\EE(|Z^{(0)}|^2)=\Theta(1)$. Repeating the same process $t$ times in a back-and-forth manner results in a final signal at node $j\in\mathcal{R}_i$ in the left or the right cluster (depending on whether $t$ is odd or even) that is given by 
\begin{align*}
Y_j^{(k)} & = \left(\frac{A Mn^{1-\nu}}{d}\right)^t \sqrt{\SNR}\,X + \left(\frac{A Mn^{1-\nu}}{d}\right)^{t}\sqrt{\frac{N_C}{Mn^{1-\nu}}} \, Z^{(0)}\\
&  + \ldots + \left(\frac{A Mn^{1-\nu}}{d}\right)^{t-s}\sqrt{\frac{N_C}{Mn^{1-\nu}}} \, Z^{(s)}+\ldots+Z_j^{(t)},
\end{align*}
where
$$
Z^{(s)}= \frac{d}{\sqrt{N_C Mn^{1-\nu}}}\sum_{b=1}^{N_C}\sum_{k\in\mathcal{T}_b} \frac{\exp(2\pi i (r_{jk}-x_k))}{r_{jk}} Z_k^{(s)}.
$$
Note again that $\EE(|Z^{(s)}|^2)=\Theta(1)$, and $Z_j^{(t)}$ is additive white Gaussian noise of variance $\Theta(1)$. Finally, note that Lemma \ref{lem:Interference} ensures an upper bound on the beamforming gain of the noise signals, i.e.,  
$$
\left|\sum_{l=1}^{N_C}\sum_{k\in\mathcal{T}_l} \frac{\exp(2\pi i (r_{jk}-x_k))}{r_{jk}}\right|
\le \left|\sum_{k\in\mathcal{T}_i} \frac{\exp(2\pi i (r_{jk}-x_k))}{r_{jk}}\right| + \left|\sum_{\substack{l=1\\ l\ne i}}^{N_C}\sum_{k\in\mathcal{T}_l} \frac{\exp(2\pi i (r_{jk}-x_k))}{r_{jk}}\right| \le \left(1+K_2\frac{\log n}{n^{\epsilon}}\right)\frac{Mn^{1-\nu}}{d}.
$$
(notice indeed that the first term in the middle expression is trivially upper bounded by $Mn^{1-\nu}/d$, as it contains $M$ terms, all less than $1/d$). Now, we want the power of the signal to be of order 1, that is:
\begin{align}\label{eq:signal_power}
& \EE\left(\left(\left(\frac{A Mn^{1-\nu}}{d}\right)^t \sqrt{\SNR}\,X\right)^2\right)
= \left(\frac{A Mn^{1-\nu}}{d}\right)^{2t} \SNR = \Theta(1)\\\nonumber
& \Rightarrow A = \Theta\left(\frac{d}{Mn^{1-\nu}}\,\SNR^{-\frac{1}{2t}}\right).
\end{align}
Since at each round of TDMA cycle there are $\Theta\left(N_C Mn^{1-\nu}\right)=\Theta\left(n^{1-\epsilon}\right)$ nodes transmitting, then every node will be active $\Theta\left(\frac{N_C Mn^{1-\nu}}{n}\right)$ fraction of the time. As such, the amplification factor is given by
$$
A=\Theta\left(\sqrt{\frac{n^{\nu}}{N_C M}\tau P}\right),
$$
where $\tau$ is the number of time slots between two consecutive transmissions, i.e. every $\tau$ time slots we have one transmission. Therefore, we have
\begin{align*}
&A = \Theta\left(\frac{d}{Mn^{1-\nu}}\,\SNR^{-\frac{1}{2t}}\right)=\Theta\left(\sqrt{\frac{n^{\nu}}{N_C M}\tau P}\right)\\
& \Rightarrow \tau = O\left(\frac{1}{P} \left(\frac{d}{Mn^{1-\nu}}\right)^2n^{-\epsilon}\SNR^{-1/t}\right).
\end{align*}
We can pick the number of back-and-forth transmissions $t$ sufficiently large to ensure that $\SNR^{-\frac{1}{t}}=O(n^{\epsilon})$, which results in $$\tau=O\left(\frac{1}{P} \left(\frac{d}{Mn^{1-\nu}}\right)^2\right)=O\left(\frac{1}{n^{2-\frac{3\nu}{2}}P}\right).$$  
Moreover, the noise power is given by
\begin{align*}
\sum_{s=0}^{t-1} \EE\left(\left(\left(\frac{A Mn^{1-\nu}}{d}\right)^{t-s} \sqrt{\frac{N_C}{Mn^{1-\nu}}}Z^{(s)}\right)^2\right) + \EE\left(\left(Z_j^{(t)}\right)^2\right) &\le t \, \EE\left(\left(\left(\frac{A Mn^{1-\nu}}{d}\right)^t \sqrt{\frac{N_C}{Mn^{1-\nu}}}Z^{(0)}\right)^2\right) + 1\\
&\le t \, \left(\frac{A Mn^{1-\nu}}{d}\right)^{2t}{\frac{N_C}{Mn^{1-\nu}}}+1\\
&\overset{(a)}{\le}t+1=\Theta(1),
\end{align*}
where $(a)$ is true if and only if $\SNR=\Omega\left(\frac{N_C}{Mn^{1-\nu}}\right)=\Omega(n^{\nu/2-1-\epsilon})$ (check eq. \eqref{eq:signal_power}), which is true: Distance separating any two nodes in the network is as most $\sqrt{2n^{\nu}}$, which implies that the $\SNR$ of the received signal at all the nodes in the network is $\Omega(n\tau P/n^{\nu})=\Omega\left(n^{\nu/2-1}\right)$.

Given that the required $\tau=O\left(\frac{1}{n^{2-3\nu/2}P}\right)$, we can see that for $P=O(n^{3\nu/2-2})$ the broadcast rate between simultaneously operating clusters is $\Omega(n^{2-3\nu/2}P)$. Finally, applying TDMA of $\frac{n}{N_C Mn^{1-\nu}}=\Theta(n^{\epsilon})$ steps ensures that $X$ is successfully decoded at all nodes and the broadcast rate $R_n=\Omega\left(n^{2-3\nu/2-\epsilon}P\right)$. This completes the proof of the theorem. 
\end{proof}

%%%%%%%%%%%%%%%%%%%%%%%%
%%%%%%%%%%%%%%%%%%%%%%%%
%%%%%%%%%%%%%%%%%%%%%%%%
\section{Optimality of the Scheme} \label{sec:opt}
We start with the general upper bound already established in \cite{HL15} on the broadcast capacity of wireless networks at low SNR, which applies to a general fading matrix $H$.
\begin{thm} \label{thm:cap}
Let us consider a network of $n$ nodes and let $H$ be the $n \times n$ matrix with $h_{jj}=0$ on the diagonal and $h_{jk}=$ the fading coefficient between node $j$ and node $k$ in the network. The broadcast capacity of such a network with $n$ nodes is then upper bounded by
$$
C_n \le P \, \Vert H \Vert^2
$$
where $P$ is the power available per node and $\Vert H \Vert$ is the spectral norm (i.e.~the largest singular value) of $H$.
\end{thm}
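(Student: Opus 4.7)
The plan is to apply the information-theoretic cut-set upper bound for multi-message networks at a carefully chosen single-receiver cut, and then to extract the spectral norm of $H$ by averaging over the receiver. Fix any node $j$ and consider the cut $S = [n] \setminus \{j\}$. In the broadcast setup every one of the $n-1$ messages $\{W_k\}_{k \ne j}$ is sourced inside $S$ and must be decoded at $j \in S^{c}$, so the aggregate rate crossing the cut equals $(n-1)\,r_n$, and the cut-set bound gives
$$
(n-1)\,r_n \;\le\; \max_{p(x_{[n]})}\, I\bigl(X_{-j};\,Y_j \,\big|\, X_j\bigr),
$$
where the maximum is over joint input distributions satisfying the per-node power constraint $\EE(|X_k|^2) \le P$.

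For the Gaussian channel considered here, a Gaussian maximiser bounds this mutual information by $\log(1 + \mathbf{h}_j^{\ast} K_{-j|j}\, \mathbf{h}_j)$, where $\mathbf{h}_j$ is the $j$-th row of $H$ regarded as a column vector (with its zero diagonal entry included) and $K_{-j|j}$ is the conditional covariance of $X_{-j}$ given $X_j$. By the PSD-ordering inequality $K_{-j|j} \preceq K := \mathrm{Cov}(X)$, the quadratic form is in turn bounded by $\mathbf{h}_j^{\ast} K \mathbf{h}_j$, namely the mean signal power seen at receiver $j$. The key step is then the averaging over~$j$:
$$
\sum_{j=1}^{n} \mathbf{h}_j^{\ast} K \mathbf{h}_j \;=\; \Tr(H K H^{\ast}) \;=\; \Tr(K H^{\ast} H) \;\le\; \lambda_{\max}(H^{\ast}H)\,\Tr(K) \;\le\; n P\,\|H\|^{2},
$$
where I have used $\Tr(K) = \sum_k K_{kk} \le n P$. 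Consequently there exists at least one receiver $j^{\ast}$ for which $\mathbf{h}_{j^{\ast}}^{\ast} K \mathbf{h}_{j^{\ast}} \le P\,\|H\|^{2}$, and the cut-set bound evaluated at this $j^{\ast}$, combined with $\log(1+x) \le x$, yields $(n-1)\,r_n \le P\,\|H\|^{2}$. Hence $C_n = n\,r_n \le \tfrac{n}{n-1}\,P\,\|H\|^{2}$, which matches the announced bound up to the asymptotically negligible factor $n/(n-1)\to 1$.

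The main delicate point is the second step, where the cut-set bound a priori allows the transmitters in $S$ to cooperate via correlated inputs, so that one could in principle pick up a full beamforming gain of $(\sum_k |h_{jk}|)^{2}$ at the receiver. Averaging the quadratic form $\mathbf{h}_j^{\ast} K \mathbf{h}_j$ over $j$ together with $\Tr(K)\le nP$ is precisely what lets one trade this beamforming gain for the spectral-norm quantity $P\,\|H\|^{2}$, at the cost of a judicious rather than worst-case choice of receiver. Beyond this observation, no genuine obstacle is anticipated.
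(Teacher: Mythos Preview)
The paper does not prove this theorem; it is quoted verbatim from the earlier reference~\cite{HL15} (``already established in \cite{HL15}''), so there is no in-paper argument to compare against.

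Your derivation is the natural one and is essentially correct: single-receiver cut-set bounds, the Gaussian upper bound $\log(1+\mathbf h_j^{\ast}K\mathbf h_j)$ on each cut, and the trace identity $\sum_j \mathbf h_j^{\ast}K\mathbf h_j=\Tr(HKH^{\ast})\le \|H\|^{2}\Tr(K)\le nP\|H\|^{2}$ to locate a good receiver $j^{\ast}$. One point deserves tightening. As written, your first display carries a $\max_{p(x_{[n]})}$, and you then speak of ``a Gaussian maximiser'' and set $K:=\mathrm{Cov}(X)$; read literally, this $K$ is the maximiser for the particular cut $j$ and therefore depends on $j$, which would invalidate the averaging step. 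The correct phrasing (which your last paragraph shows you have in mind) is to invoke the cut-set theorem in its ``there exists a \emph{single} input distribution such that all cut constraints hold simultaneously'' form, take $K$ to be the covariance of \emph{that} fixed distribution, and then bound each $I(X_{-j};Y_j\mid X_j)\le\log(1+\mathbf h_j^{\ast}K\mathbf h_j)$ with this common $K$ before averaging. With that adjustment the argument is clean; the residual factor $n/(n-1)$ is immaterial for every use of the theorem in the paper.
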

 
We now aim to specialize Theorem \ref{thm:cap} to line-of-sight fading, where the matrix $H$ is given by
\begin{equation} \label{eq:fading_matrix}
h_{jk} = \begin{cases} 0 & \text{if }j=k \\ \dfrac{\exp(2 \pi i r_{jk})}{r_{jk}} & \text{if } j \ne k \end{cases}
\end{equation}
The rest of the section is devoted to proving the proposition below which, together with Theorem \ref{thm:cap}, shows the asymptotic optimality of the back-and-forth beamforming scheme for \textbf{small area networks}/\textbf{dense networks} ($0<\nu<2$) and the asymptotic optimality of the simple TDMA based broadcast scheme for \textbf{high scattering environment}/\textbf{sparse networks} ($\nu\geq 2$) at low SNR and under LOS fading.

\begin{figure}[t]
 \centering
 \includegraphics[scale=0.7]{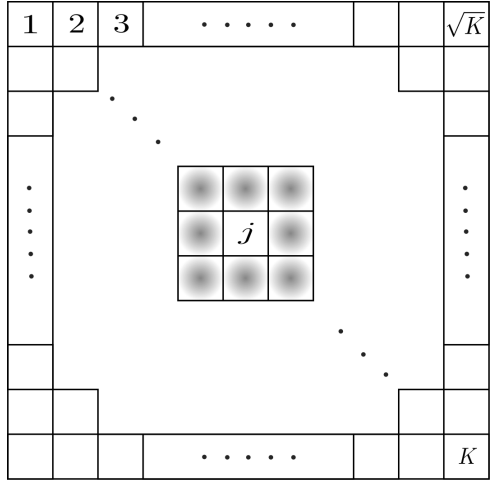}
 \caption{$\sqrt{n}\times\sqrt{n}$ network split into $K$ clusters and numbered in order. As such, $R_j=\{j-\sqrt{K}-1,j-\sqrt{K},j-\sqrt{K}+1,j-1,j,j+1, j+\sqrt{K}-1,j+\sqrt{K},j+\sqrt{K}+1\}$, which represents the center square containing the cluster $j$ and its $8$ neighbors (marked in shades).}
 \label{fig:2}
 \end{figure}
\begin{prop} \label{prop:ub_H}
Let $H$ be the $n \times n$ matrix given by \eqref{eq:fading_matrix}. For every $\varepsilon>0$, there exists a constant $c>0$ such that
$$
\Vert H \Vert^2 \le \begin{cases} 
n^{2-\frac{3\nu}{2}+\epsilon}P=n^{1-\frac{\nu}{2}+\epsilon}\text{ $\SNR_s$} & \text{if }0<\nu<2\\
n^{1-\nu+\epsilon}P=n^{\epsilon}\text{ $\SNR_s$} & \text{if }\nu\ge 2
\end{cases}
$$
with high probability as $n$ gets large.
\end{prop}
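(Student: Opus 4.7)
The plan is to split the matrix as $H = H_\mathrm{near} + H_\mathrm{far}$, where $H_\mathrm{near}$ collects the entries $h_{jk}$ whose corresponding nodes lie in the same or adjacent clusters of a partition of the network into clusters of area $M$ (as in Fig.~\ref{fig:2}), and $H_\mathrm{far}$ collects the remaining entries. The subadditivity bound $\|H\|\le\|H_\mathrm{near}\|+\|H_\mathrm{far}\|$ allows us to treat the two pieces separately and optimize $M$ at the end so that the two contributions match.

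For $\|H_\mathrm{near}\|$, I would apply Schur's test, $\|H_\mathrm{near}\|^2 \le \|H_\mathrm{near}\|_1\,\|H_\mathrm{near}\|_\infty$, which reduces the problem to a uniform bound on the row sum $\sum_{k \in R_j} 1/r_{jk}$. Lemma~\ref{lem:cluster_size} ensures that each cluster contains $\Theta(Mn^{1-\nu})$ nodes with high probability, and approximating the sum by an integral against the uniform density $n^{1-\nu}$ over the $3\times 3$ block $R_j$ yields $O(n^{1-\nu}\sqrt{M}\,\log n)$. Anomalously small interparticle distances, which would otherwise spoil the bound, are ruled out by a direct union bound on pairwise separations. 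This gives $\|H_\mathrm{near}\|^2 \le n^{2(1-\nu)}M\, n^{\epsilon}$, and the choice $M = n^{\nu/2}$ matches the target $n^{2-3\nu/2+\epsilon}$.

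The delicate piece is $\|H_\mathrm{far}\|$. The naive Frobenius bound $\|H_\mathrm{far}\|_F^2 \lesssim n^2/M$ falls short of the target by a factor $n^\nu$, so the improvement must come from cancellation of the random phases $\exp(2\pi i\, r_{jk})$. I would use the moment method: for even $k$, one has $\|H_\mathrm{far}\|^{2k}\le \Tr\bigl((H_\mathrm{far} H_\mathrm{far}^*)^k\bigr)$, and this trace expands as a sum over closed walks of length $2k$ weighted by products of entries. Taking expectation over the node positions, walks whose edges are not suitably paired contribute negligibly because the aggregate phase $\sum_\ell \pm\, r_{j_\ell j_{\ell+1}}$ is highly oscillatory as a function of any unmatched node's position; the remaining paired walks contribute a combinatorial factor times a deterministic moment of $\prod 1/r^2$. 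Choosing $k=\Theta(\log n)$ converts the resulting moment estimate into a spectral bound at the cost of only an $n^{\epsilon}$ slack, which, combined with the near-part estimate at $M=n^{\nu/2}$, yields the claim for $0<\nu<2$.

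The main technical obstacle is precisely this combinatorial--analytic bookkeeping for the non-paired walks: the entries of $H_\mathrm{far}$ are not independent (they share nodes across rows), and within each entry the magnitude $1/r_{jk}$ and the phase $\exp(2\pi i\, r_{jk})$ are functions of the same random variable, so standard random-matrix tools do not apply directly — this is the unconventional random-matrix analysis anticipated in the introduction. The regime $\nu \ge 2$ is handled by the same decomposition but is easier: typical interparticle distances exceed $\sqrt{n}$, the phases decouple cleanly, and a diagonal Frobenius-type contribution already reaches the target $n^{1-\nu+\epsilon}$.
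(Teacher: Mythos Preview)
Your near/far split with Schur's test on $H_{\mathrm{near}}$ is sound and matches what the paper does at the terminal level of its recursion. The gap is in the moment argument for $H_{\mathrm{far}}$.

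The specific issue is your claim that ``walks whose edges are not suitably paired contribute negligibly'' and that the surviving contribution is a deterministic moment of $\prod 1/r^2$. This is the standard Wigner/Wishart heuristic, and it is precisely what fails in the regime $0<\nu<2$. In the paper's moment computation (Lemma~\ref{lem:blockNorm}), the trace satisfies
\[
\EE\big(\Tr\big((\widehat H\widehat H^\dagger)^\ell\big)\big)=O\!\left(\max\Big\{\tfrac{m^{\ell+1}}{d^{2\ell}},\; m^{2\ell}S_\ell\Big\}\right),\qquad
S_\ell=O\!\left(\tfrac{(\log A)^{\ell-1}}{A^{\ell-1}d^{\ell+1}}\right),
\]
where the first term is the paired (Catalan-type) contribution and the second comes from walks with all $2\ell$ vertices distinct. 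For $0<\nu<2$ it is the \emph{unpaired} term $m^{2\ell}S_\ell$ that dominates and produces the exponent $2-\tfrac{3\nu}{2}$; the paired term only gives $m/d^2$, which is the $\nu\ge 2$ bound. So the oscillatory integral does not make the unpaired walks negligible---it makes each one gain a factor $1/A$ per free vertex (stationary phase), and counting that gain correctly is the whole proof. Dropping those walks underestimates $\|H\|$; bounding them by $\prod 1/r$ without the phase overestimates it.

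The paper's route is also structurally different from yours. It does not split $H$ additively. It uses a block-Ger\v{s}gorin inequality (Lemma~\ref{lem:Gersgorin}) to bound $\|H\|$ by row-sums of block norms $\|H_{jk}\|$; far blocks are controlled by Lemma~\ref{lem:blockNorm} (whose proof carries the $S_\ell$ estimate), while the nine near blocks $H(R_j)$ are treated recursively at a finer cluster scale. Optimizing the scales across the recursion is what absorbs the slack and yields the bound for $\nu\ge 2$. The case $0<\nu<2$ is then reduced to $\nu=2$ by a superposition trick: regard the dense network as an overlay of $n^{1-\nu/2}$ sparse sub-networks of $m=n^{\nu/2}$ nodes each on the same area $n^\nu=m^2$, and apply block-Ger\v{s}gorin once more across the overlay. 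In particular the paper reverses your difficulty assessment: the recursive work is done for $\nu\ge 2$, and the dense case inherits from it.
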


Analyzing directly the asymptotic behavior of $\Vert H \Vert$ reveals itself difficult. We therefore decompose our proof into simpler subproblems. The first building block of the proof is the following Lemma, which can be viewed as a generalization of the classical Ger\v{s}gorin discs' inequality.

\begin{lem} \label{lem:Gersgorin}
Let $B$ be an $n \times n$ matrix decomposed into blocks $B_{jk}$, $j,k=1,\ldots,K$, each of size $M \times M$, with $n=K M$. Then
$$
\Vert B \Vert \le \max \left\{ \max_{1 \le j \le K} \sum_{k=1}^K \Vert B_{jk} \Vert, \max_{1 \le j \le K} \sum_{k=1}^K \Vert B_{kj} \Vert \right\}
$$
\end{lem}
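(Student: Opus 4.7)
The plan is to reduce the block‐norm bound to the classical scalar bound $\Vert A\Vert \le \sqrt{\Vert A\Vert_1\,\Vert A\Vert_\infty} \le \max(\Vert A\Vert_1,\Vert A\Vert_\infty)$ applied to a suitably constructed $K\times K$ ``norm matrix''.

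Concretely, I would introduce the auxiliary matrix $\tilde B \in \mathbb{R}^{K\times K}$ with nonnegative entries $\tilde B_{jk} = \Vert B_{jk}\Vert$, and argue in two steps.

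Step 1: show that $\Vert B\Vert \le \Vert \tilde B\Vert$. Given any $x \in \mathbb{C}^n$, partition it conformally as $x=(x_1,\ldots,x_K)$ with $x_j \in \mathbb{C}^M$. Then $(Bx)_j = \sum_{k=1}^K B_{jk}\, x_k$, so by the triangle inequality and the definition of operator norm,
$$
\Vert (Bx)_j \Vert \;\le\; \sum_{k=1}^K \Vert B_{jk}\Vert \,\Vert x_k\Vert \;=\; (\tilde B\, u)_j,
$$
where $u \in \mathbb{R}^K$ is the vector with entries $u_j = \Vert x_j\Vert$. Squaring and summing over $j$ gives
$$
\Vert Bx\Vert^2 \;=\; \sum_{j=1}^K \Vert (Bx)_j\Vert^2 \;\le\; \sum_{j=1}^K (\tilde B u)_j^2 \;=\; \Vert \tilde B\, u\Vert^2 \;\le\; \Vert \tilde B\Vert^2\, \Vert u\Vert^2 \;=\; \Vert \tilde B\Vert^2\, \Vert x\Vert^2,
$$
which is the desired inequality.

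Step 2: apply the classical bound $\Vert \tilde B\Vert \le \sqrt{\Vert \tilde B\Vert_1\,\Vert \tilde B\Vert_\infty}$ (itself a one‐line consequence of Cauchy--Schwarz applied to rows/columns), together with $\sqrt{ab}\le \max(a,b)$ for nonnegative $a,b$. Since the entries of $\tilde B$ are already nonnegative, its maximum absolute row sum is $\max_j \sum_k \Vert B_{jk}\Vert$ and its maximum absolute column sum is $\max_j \sum_k \Vert B_{kj}\Vert$, giving exactly the right‐hand side of the lemma. Chaining Steps 1 and 2 completes the proof.

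There is no real obstacle here: the whole argument is essentially a block‐ified version of the standard proof of Gerš{}gorin's disc inequality, and the only ingredient beyond triangle inequality is the trivial scalar bound of Step 2. The point worth flagging is only that $\Vert \cdot \Vert$ on the blocks $B_{jk}$ denotes the operator norm (not, e.g., the Frobenius norm), since this is what makes the inequality $\Vert (Bx)_j\Vert \le \sum_k \Vert B_{jk}\Vert\,\Vert x_k\Vert$ tight enough to be useful in the applications of the lemma that follow.
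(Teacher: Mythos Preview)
Your proof is correct and takes a somewhat different route from the paper's. The paper first treats the Hermitian positive semi-definite case by a direct block Ger\v{s}gorin argument (pick an eigenvector, block-partition it, select the block of largest norm, and apply the triangle inequality), and then reduces the general case to this by applying the Hermitian bound to $BB^{\dagger}$ and expanding $(BB^{\dagger})_{jk}=\sum_l B_{jl}B_{kl}^{\dagger}$, which yields $\Vert B\Vert^2 \le \big(\max_j \sum_l \Vert B_{jl}\Vert\big)\big(\max_j \sum_k \Vert B_{kj}\Vert\big)$ before concluding with $ab\le\max(a,b)^2$. Your approach sidesteps the Hermitian/general split entirely: Step~1 compresses $B$ to the nonnegative $K\times K$ ``norm matrix'' $\tilde B$ via $\Vert B\Vert\le\Vert\tilde B\Vert$, and Step~2 invokes the scalar Schur/Holmgren bound $\Vert\tilde B\Vert\le\sqrt{\Vert\tilde B\Vert_1\Vert\tilde B\Vert_\infty}$. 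This is more streamlined and arguably more transparent, while the paper's version has the pedagogical merit of literally mirroring the classical eigenvector proof of Ger\v{s}gorin's theorem (hence the lemma's name). Both arguments ultimately rest on the same triangle-inequality step $\Vert (Bx)_j\Vert \le \sum_k \Vert B_{jk}\Vert\,\Vert x_k\Vert$ and the same $\sqrt{ab}\le\max(a,b)$ finish.
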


The proof of this Lemma is relegated to the Appendix. The second building block of this proof is the following lemma, the proof of which is also given in the Appendix.

\begin{lem}\label{lem:blockNorm}
Let $\widehat{H}$ be the $m \times m$ channel matrix between two square clusters of $m$ nodes distributed uniformly at random, each of area $A=m^{\nu}$, $\nu>0$, then
\begin{align*}
\Vert \widehat{H} \Vert^2 &\le \, \max \left\{\frac{m^{2+\epsilon}}{Ad}, \frac{m^{1+\epsilon}}{d^2} \right\}\\
&\le \begin{cases}\frac{m^{2+\epsilon}}{Ad} & \text{if } 0<\nu<2\\
\max\left\{\frac{m^{2+\epsilon}}{Ad},\frac{m^{1+\epsilon}}{d^2} \right\} & \text{if } \nu\ge 2
\end{cases}
\end{align*}
with high probability as $m$ gets large, where $2\sqrt{A} \le d \le A$ denotes the distance between the centers of the two clusters. 
\end{lem}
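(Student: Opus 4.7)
My plan is to bound $\Vert\widehat H\Vert^2$ via the trace moment method. For any positive integer $p$,
$$
\Vert\widehat H\Vert^{2p} \le \Tr\bigl((\widehat H\widehat H^\ast)^p\bigr),
$$
so an expectation bound on $\Tr\bigl((\widehat H\widehat H^\ast)^p\bigr)$ combined with Markov's inequality will yield a high-probability bound on $\Vert\widehat H\Vert^2$; the $m^{\varepsilon}$ slack in the statement will be produced by choosing $p$ of order $1/\varepsilon$ (or $\log m$ for the ``with high probability'' conclusion).

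Expanding the trace as a sum over closed length-$2p$ walks on the bipartite graph between the two clusters gives
$$
\Tr\bigl((\widehat H\widehat H^\ast)^p\bigr) = \sum_{\substack{j_1,\ldots,j_p\in\mathcal C_1\\ k_1,\ldots,k_p\in\mathcal C_2}} \prod_{\ell=1}^p \frac{e^{2\pi i(r_{j_\ell k_\ell}-r_{j_{\ell+1}k_\ell})}}{r_{j_\ell k_\ell}\,r_{j_{\ell+1}k_\ell}},
$$
with cyclic indices. Since $d\ge 2\sqrt A$ forces $d/2\le r_{jk}\le 2d$, each magnitude factor is $O(1/d^{2p})$; the entire improvement over the trivial bound $\Vert\widehat H\Vert^2\le\Vert\widehat H\Vert_F^2\le m^2/d^2$ must come from the oscillating phases. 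I would substitute the far-field expansion $r_{jk}\simeq d+(u_k-a_j)-b_j v_k/d+(v_k^2+b_j^2)/(2d)$ (placing cluster~1 at the origin and cluster~2 centred at $(d,0)$), after which each free position enters the total phase linearly and each integration against the uniform density on a square of side $\sqrt A$ becomes a 2D Fourier-type integral bounded by a product of two $\mathrm{sinc}$ factors, that is, by $\min\{1,\,C/(\sqrt A|\xi|)\}^2$.

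An index appearing only once in the walk produces an effective momentum $|\xi|$ of order $\sqrt A/d$ (the angular spread of the opposite cluster) and hence a decay of order $(d/A)^2$, making its contribution subdominant. I would therefore isolate the ``paired'' walks, in which every $j$- and $k$-index appears at least twice, count them by the standard Catalan-type $p^{O(p)}$ estimate on pair partitions, and add their contributions to obtain
$$
\EE\bigl[\Tr\bigl((\widehat H\widehat H^\ast)^p\bigr)\bigr] \le p^{O(p)}\,m\,\max\{m^2/(Ad),\,m/d^2\}^p,
$$
after which Markov's inequality with $p$ chosen large enough in terms of $\varepsilon$ delivers the stated bound.

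The hard part will be the combinatorial bookkeeping together with the uniform justification of the paraxial expansion. Walks whose indices carry multiplicity greater than two can, through the quadratic term $(v_k^2+b_j^2)/(2d)$, create partial constructive interference that competes with the paired walks and has to be carefully separated from them. Moreover, in the borderline regime $d\sim 2\sqrt A$ the non-paraxial remainder in $r_{jk}$ is of size $\sim A/d\sim\sqrt A$ in the phase, too large to be discarded directly; this likely calls for a stationary-phase estimate with controlled error, or for a dyadic decomposition of each cluster into subregions on which the paraxial residual is uniformly bounded, the subblocks being recombined via Lemma~\ref{lem:Gersgorin}.
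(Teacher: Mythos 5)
Your overall strategy --- bounding $\Vert\widehat{H}\Vert^{2p}\le\Tr\bigl((\widehat{H}\widehat{H}^{\dagger})^{p}\bigr)$, estimating the expected trace, and applying Markov's inequality with $p$ large in terms of $\varepsilon$ --- is exactly the route the paper takes. The problem is that the proposal leaves unproven precisely the estimate that carries the whole lemma, and the combinatorial picture you sketch is miscalibrated in the regime that matters. You declare walks with singleton indices subdominant and let pair partitions dominate; that accounting yields only $p^{O(p)}m^{p+1}/d^{2p}$, i.e.\ the $m^{1+\epsilon}/d^{2}$ branch of the statement. But since $d\ge 2\sqrt{A}$ implies $md/A\ge 2m^{1-\nu/2}\ge 1$ for $\nu\le 2$, in the dense regime $0<\nu<2$ the binding bound is the \emph{other} branch, $m^{2+\epsilon}/(Ad)$, and it arises from the $\sim m^{2p}$ closed walks in which \emph{all} indices are distinct: each such walk contributes the coherent cycle expectation $S_{p}=|\EE(f_{j_1k_1}f_{j_2k_1}^{*}\cdots f_{j_pk_p}f_{j_1k_p}^{*})|$, and the entire content of the lemma is the oscillatory-integral estimate $S_{p}=O\bigl((c\log A)^{p-1}/(A^{p-1}d^{p+1})\bigr)$, which the paper imports from \cite{HLarxiv1} and combines with the recursion $\EE[\Tr((\widehat{H}\widehat{H}^{\dagger})^{p})]\le\frac{2p}{m}\sum_{t}\EE[\Tr((\widehat{H}\widehat{H}^{\dagger})^{t})]\,\EE[\Tr((\widehat{H}\widehat{H}^{\dagger})^{p-t})]+m^{2p}S_{p}$ to handle index coincidences. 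Your sinc/paraxial discussion is where an $S_p$-type bound would have to come from, but you yourself flag that the quadratic phase terms, the multiplicity-$\ge 3$ walks, and the borderline case $d\sim 2\sqrt{A}$ are unresolved. So the gap is not bookkeeping; it is the central quantitative estimate, without which the $m^{2+\epsilon}/(Ad)$ bound --- the one actually used to prove optimality for $0<\nu<2$ --- does not follow.

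Two of your instincts do match the paper's devices and are worth keeping: the observation that the borderline geometry $d\sim 2\sqrt{A}$ forces a further spatial decomposition recombined via Lemma~\ref{lem:Gersgorin} is precisely what the $\nu=1$ argument in \cite{HLarxiv1} does by slicing each cluster into $\sqrt{A}\times 1$ rectangles; and your final claimed moment bound $p^{O(p)}\,m\,\max\{m^{2}/(Ad),\,m/d^{2}\}^{p}$ is of the right form (the paper obtains $O(\max\{m^{p+1}/d^{2p},\,m^{2p}(\log A)^{p-1}/(A^{p-1}d^{p+1})\})$, which agrees up to factors absorbed by $m^{\epsilon}$ after taking $p$-th roots). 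What is missing is the bridge from your heuristic to that bound: a proof that integrating each free node position against the uniform density gains a factor of order $d\log A/A$ per distinct index pair, uniformly over the allowed geometries.
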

\begin{proof}[Proof of Proposition \ref{prop:ub_H}]
First we consider the case where $\nu\geq 2$. The strategy for the proof is now the following: in order to bound $\Vert H \Vert$, we divide the matrix into smaller blocks, apply Lemma \ref{lem:Gersgorin} and Lemma \ref{lem:blockNorm} in order to bound the off-diagonal terms $\Vert H_{jk} \Vert$. For the diagonal terms $\Vert H_{jj} \Vert$, we reapply Lemma \ref{lem:Gersgorin} and proceed in a recursive manner, until we reach small size blocks for which a loose estimate is sufficient to conclude.

Note that a network with area $A_0=n^{\nu}$ has a density of $n^{1-\nu}$. This means that a cluster of area $A_1=m_1n^{\nu-1}$ contains $m_1$ nodes with high probability. Let us therefore decompose the network into $K_1$ square clusters of area $m_1n^{\nu-1}$ with $m_1$ nodes each. Without loss of generality, we assume each cluster has exactly $m_1$ nodes and $K_1=n/m_1=A_0/A_1$. By Lemma \ref{lem:Gersgorin}, we obtain
\begin{equation} \label{eq:ub_H}
\Vert H \Vert \le \max \left\{ \max_{1 \le j \le K_1} \sum_{k=1}^{K_1} \Vert H_{jk} \Vert, \max_{1 \le j \le K_1} \sum_{k=1}^{K_1} \Vert H_{kj} \Vert \right\}
\end{equation}
where the $n \times n$ matrix $H$ is decomposed into blocks $H_{jk}$, $j,k=1,\ldots,K_1$, with $H_{jk}$ denoting the $m_1 \times m_1$ channel matrix between cluster number $j$ and cluster number $k$ in the network. Let us also denote by $d_{jk}$ the corresponding inter-cluster distance, measured from the centers of these clusters. Based on Lemma \ref{lem:blockNorm}, we obtain
$$
\Vert H_{jk} \Vert^2 \le \max \left\{\frac{m_1^{2+\epsilon}}{A_1d_{jk}}, \frac{m_1^{1+\epsilon}}{d_{jk}^2} \right\} \overset{(a)}{=} \frac{m_1^{1+\epsilon}}{d_{jk}^2}
$$
with high probability as $m_1 \to \infty$, where $(a)$ follows from the fact that $A_1/m_1=n^{\nu-1} \ge n^{\nu/2} \ge d_jk$, since $\nu\ge 2$ (equivalently, $\frac{m_1}{A_1}\le \frac{1}{d_{jk}}$).

Let us now fix $j \in \{1,\ldots,K_1\}$ and define $R_j = \{ 1 \le k \le K_1: d_{jk} < 2\sqrt{A_1} \}$ and $S_j = \{1 \le k \le K_1: d_{jk} \ge 2\sqrt{A_1} \}$ (see Fig. \ref{fig:2}). By the above inequality, we obtain
$$
\sum_{k=1}^{K_1} \Vert H_{jk} \Vert \le \sum_{k \in R_j} \Vert H_{jk} \Vert + \sqrt{n^{\epsilon}} \, \sum_{k \in S_j} \frac{\sqrt{m_1}}{d_{jk}}
$$
with high probability as $m_1$ gets large. Observe that as there are $8t$ clusters or less at distance $t\sqrt{A_1}$ from cluster $j$, so we obtain 
\begin{align*}
\sum_{k \in S_j} \frac{\sqrt{m_1}}{d_{jk}} & \le \sum_{t=2}^{\sqrt{K_1}} 8t \, \frac{\sqrt{m_1}}{t \sqrt{A_1}} = O  \left(\sqrt{\frac{K_1m_1}{A_1}} \right). 
\end{align*}
There remains to upper bound the sum over $R_j$. Observe that this sum contains at most 9 terms: namely the term $k=j$ and the 8 terms corresponding to the 8 neighboring clusters of cluster $j$.  It should then be observed that for each $k \in R_j$, $\Vert H_{jk} \Vert \le \Vert H(R_j) \Vert$, where $H(R_j)$ is the $9m_1 \times 9m_1$ matrix made of the $9 \times 9$ blocks $H_{j_1,j_2}$ such that $j_1,j_2 \in R_j$. Finally, this leads to
$$
\sum_{k=1}^{K_1} \Vert H_{jk} \Vert \le 9 \Vert H(R_j) \Vert + 8\sqrt{ n^{\epsilon}} \, \sqrt{\frac{K_1m_1}{A_1}}  
$$
Using the symmetry of this bound and \eqref{eq:ub_H}, we obtain
\begin{equation} \label{eq:recursion}
\Vert H \Vert \le 9 \, \max_{1 \le j \le K_1} \Vert H(R_j) \Vert + 8\sqrt{ n^{\epsilon}} \, \sqrt{\frac{K_1m_1}{A_1}}
\end{equation}
A key observation is now the following: For all $1\le j\le K_1$, the $9M \times 9M$ matrix $H(R_j)$ has exactly the same structure as the original matrix $H$. Therefore, without loss of generality, let us assume $\Vert H_1 \Vert=\max_{1 \le j \le K_1} \Vert H(R_j) \Vert=\Vert H(R_1) \Vert$. Finally, to bound $\Vert H_1 \Vert$, the same technique may be reused. This leads to the following recursive solution. 
\begin{align*}
\Vert H \Vert &= O\left( \Vert H_1 \Vert + \sqrt{ n^{\epsilon}} \, \sqrt{\frac{K_1m_1}{A_1}} \right)\\
&=O\left( \Vert H_2 \Vert + \sqrt{ n^{\epsilon}} \, \sqrt{\frac{K_2m_2}{A_2}}+\sqrt{ n^{\epsilon}}\sqrt{\frac{K_1m_1}{A_1}} \right)\\
&=O\left( \Vert H_l \Vert + \sqrt{ n^{\epsilon}} \sum_{t=1}^l \sqrt{\frac{K_t m_t}{A_t}} \right)\\
&=O\left(\Vert H_l \Vert + \sqrt{n^{\epsilon}}\sqrt{n^{1-\nu}} \sum_{t=1}^l \sqrt{K_t} \right),
\end{align*}
where $m_i$ denotes the number of nodes in a square cluster of area $A_i$. Moreover, $K_i=A_{i-1}/A_i=m_{i-1}/m_i$ denotes the number of square clusters of area $A_i$ and $m_i$ nodes in a square network of area $A_{i-1}$ containing $m_{i-1}$ nodes (note that $A_0=A=n^{\nu}$ and $m_0=n$). Finally, $\Vert H_i \Vert$ denotes the norm of the channel matrix of the network with square area $A_i$ and $m_i$ nodes. 

Note that we have a trivial bound on $\Vert H_i \Vert$. Apply for this the slightly modified version of the classical Ger\v{s}gorin inequality (which is nothing but the statement of Lemma \ref{lem:Gersgorin} applied to the case $M=1$):
$$
\Vert H_l \Vert \le \max\left\{ \max_{1 \le j \le m_l} \sum_{k=1}^{m_l} |(H_l)_{jk}|, \max_{1 \le j \le m_l} \sum_{k=1}^{m_l} |(H_l)_{jk}| \right\} = \max_{1 \le j \le m_l} \sum_{k=1 \atop k \ne j}^{m_l} \frac{1}{r_{jk}}
$$
For any $1 \le j \le m_l$, it holds with high probability that for $c$ large enough, 
$$
\sum_{k=1 \atop k \ne j}^{m_l} \frac{1}{r_{jk}} \le \sum_{t=1}^{\sqrt{m_l}} \frac{c\,t\,\log n}{t\,n^{\frac{\nu-1}{2}}} = O\left(\sqrt{m_l}\,n^{\frac{1-\nu}{2}} \log n\right)=O\left(n^{1-\nu}\sqrt{A_l}\log n\right),
$$
where the first inequality comes from the fact that at a distance $t\,n^{\frac{\nu-1}{2}}$ there are at most $c\,t$ clusters of area $n^{\nu-1}$ with at most $\log n$ nodes each. This implies that $\Vert H_l \Vert = O  \left(\sqrt{n^{\epsilon}}\,n^{1-\nu} \sqrt{A_l} \right)$ for any $\epsilon>0$. Therefore, we have
\begin{align*}
\Vert H \Vert &=O\left(\sqrt{n^{\epsilon}}\,n^{1-\nu} \sqrt{A_l} + \sqrt{n^{\epsilon}}\sqrt{n^{1-\nu}} \sum_{t=1}^l \sqrt{\frac{A_t}{A_{t-1}}} \right).
\end{align*}

Upon optimizing over the $A_i$'s, we get $A_i= n^{\nu-\frac{i}{l+1}}$. Note that $A_i$ is a decreasing function of $i$ and $A_0=n^{\nu}$. As such, for $\nu\ge 2$, we get the desired result
$$
\Vert H \Vert =O\left(n^{\frac{1-\nu}{2}}n^{\epsilon+\frac{1}{2\,(l+1)}}\right),
$$
where for any $\epsilon'>\epsilon$, we can pick $l$ large enough so as $\epsilon'<\epsilon+\frac{1}{2\,(l+1)}$ (notice that $\epsilon$ and $\epsilon'$ can be as small as we want). 

For $0<\nu<2$, we will take the following approach: We notice that a dense network can be seen as a superposition of sparse networks. In other words, we will look at a network with $n$ nodes uniformly and independently distributed over the area $n^{\nu}$, as the superposition of $n^{1-\nu/2}$ networks with $m=n^{\nu/2}$ nodes uniformly and independently distributed over the area $n^{\nu}=m^2$. Again, by Lemma \ref{lem:Gersgorin}, we obtain 
\begin{align*}
\Vert H \Vert &\le \max \left\{ \max_{1 \le j \le n^{1-\nu/2}} \sum_{k=1}^{n^{1-\nu/2}} \Vert H_{jk} \Vert, \max_{1 \le j \le n^{1-\nu/2}} \sum_{k=1}^{n^{1-\nu/2}} \Vert H_{kj} \Vert \right\}
\end{align*}
where the $n \times n$ matrix $H$ is decomposed into blocks $H_{jk}$, $j,k=1,\ldots,n^{1-\nu/2}$, with $H_{jk}$ denoting the $m \times m$ channel matrix between sparse network number $j$ and sparse network number $k$. Since each of these sparse networks has area $m^2$ with $m$ nodes, we can apply the upper bound we got for $\nu=2$, and $\forall \, j,k=1,\ldots,n^{1-\nu/2}$, obtain 
$$
\Vert H_{jk} \Vert = O\left(m^{-\frac{1}{2}+\epsilon}\right) = O\left(n^{-\frac{\nu}{4}+\frac{\epsilon}{2}}\right),
$$
which results in 
$$
\Vert H \Vert = O\left(n^{1-\frac{3\nu}{4}+\frac{\epsilon}{2}}\right).
$$
This finally proves Proposition \ref{prop:ub_H}.\end{proof}

%%%%%%%%%%%%%%%%%%%%%%%%
%%%%%%%%%%%%%%%%%%%%%%%%
%%%%%%%%%%%%%%%%%%%%%%%%
\section{Conclusion}
In this work, we characterize the broadcast capacity of a wireless network at low SNR in line-of-sight environment and under various assumptions regarding the network density. The result exhibits a dichotomy between sparse networks, where node collaboration can hardly help enhancing communication rates, and constant density networks, where significant gains can be obtained via collaborative beamforming.

\section{Acknowledgment}
S. Haddad's work is supported by Swiss NSF Grant Nr.~200020-156669.

%%%%%%%%%%%%%%%%%%%%%%%%
%%%%%%%%%%%%%%%%%%%%%%%%
%%%%%%%%%%%%%%%%%%%%%%%%
\appendix 

\begin{proof}[Proof of Lemma \ref{lem:cluster_size}]
The number of nodes in a given cluster is the sum of $n$ independently and identically distributed Bernoulli random variables $B_i$, with $\mathcal{P}(B_i=1)=M/n^{\nu}$. Hence
\begin{align*}
& \mathbb{P}\left(\sum_{i=1}^n B_i\geq (1+\delta)M n^{1-\nu}\right)\\
& =\mathbb{P}\left(\exp\left(s\sum_{i=1}^n B_i\right)\geq \exp(s(1+\delta)Mn^{1-\nu})\right)\\
& \leq \mathbb{E}^n(\exp(sB_1))\exp(-s(1+\delta)Mn^{1-\nu})\\
& = \left(\frac{M}{n^{\nu}}\exp(s)+1- \frac{M}{n^{\nu}}\right)^n\exp(-s(1+\delta)Mn^{1-\nu})\\
& \leq \exp(-Mn^{1-\nu}(s(1+\delta)-\exp(s)+1))=\exp(-Mn^{1-\nu}\Delta_+(\delta))
\end{align*}
where $\Delta_+(\delta)=(1+\delta)\log(1+\delta)-\delta$ by choosing $s=\log(1+\delta)$. The proof of the lower bound follows similarly by considering the random variables $-B_i$. The conclusion follows from the union bound.
\end{proof}
\begin{figure}
\centering
\includegraphics[scale=0.7]{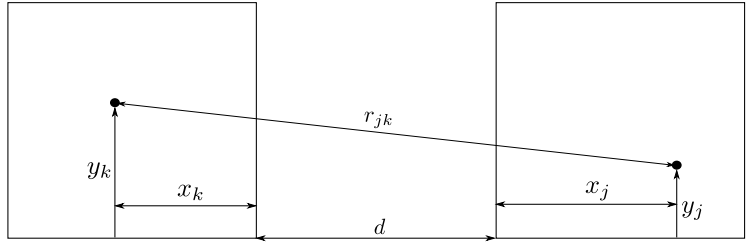}
\caption{Coordinate system.}
\label{fig:3}
\end{figure}  
\begin{proof}[Proof of Lemma \ref{lem:Full_Beamforming}]
We present lower and upper bounds on the distance $r_{jk}$ separating a receiving node $j\in \mathcal{R}_i$ and a transmitting node $k\in\mathcal{T}_i$. Denote by $x_j$, $x_k$, $y_j$, and $y_k$ the horizontal and the vertical positions of nodes $j$ and $k$, respectively (as shown in Fig. \ref{fig:3}). An easy lower bound on $r_{jk}$ is
$$
r_{jk}\geq x_k+x_j+d
$$
On the other hand, using the inequality $\sqrt{1+x} \le 1 + \frac{x}{2}$, we obtain
\begin{align*}
r_{jk}&=\sqrt{(x_k+x_j+d)^2+(y_j-y_k)^2}\\
&=\left(x_k+x_j+d\right)\sqrt{1+\frac{(y_j-y_k)^2}{(x_k+x_j+d)^2}}\\
&\leq x_k+x_j+d+\frac{(y_j-y_k)^2}{2d}\leq x_k+x_j+d+\frac{1}{2c_1^2}.
\end{align*}
Therefore, $$0\leq r_{jk}-x_k-x_j-d \leq \frac{1}{2c_1^2}.$$ After bounding $r_{jk}$, we can proceed to the proof of the lemma as follows:
\begin{align*}
\left|\sum_{k\in \mathcal{T}_i} \frac{\exp(2\pi i (r_{jk}-x_k))}{r_{jk}}\right|&=  \left|\sum_{k\in \mathcal{T}_i} \frac{\exp(2\pi i (r_{jk}-x_k-x_j-d))}{r_{jk}}\right|\\
&\geq \Re \left(\sum_{k\in \mathcal{T}_i} \frac{\exp(2\pi i (r_{jk}-x_k-x_j-d))}{r_{jk}}\right)\\
&\geq \sum_{k\in \mathcal{T}_i} \frac{\cos\left(\frac{\pi}{ c_1^2}\right)}{r_{jk}}\ge K_1\frac{Mn^{1-\nu}}{d},
\end{align*}
when the constant $c_1$ is chosen sufficiently large so that $\cos\left(\frac{\pi}{ c_1^2}\right)>0$.
\end{proof}

\begin{proof}[Proof of Lemma \ref{lem:Interference}]
There are $N_C$ clusters transmitting simultaneously. Except for the horizontally adjacent cluster of a given cluster pair ($i$-th cluster pair), all the rest of the transmitting clusters are considered as interfering clusters (there are $N_C-1$ of these). With high probability, each cluster contains $\Theta(Mn^{1-\nu})$ nodes. For the sake of clarity, we assume here that every cluster contains exactly $Mn^{1-\nu}$ nodes, but the argument holds in the general case. In this lemma, we upper bound the magnitude of interfering signals from the simultaneously interfering clusters at node $j\in\mathcal{R}_i$ as follows
\begin{align*}
\left|\sum_{\substack{l=1\\ l\neq i}}^{N_C}\sum_{k\in \mathcal{T}_l} \frac{\exp(2\pi i (r_{jk}-x_k))}{r_{jk}}\right| 
&\le \left|\sum_{\substack{l=1\\ l\neq i}}^{N_C}\sum_{k\in \mathcal{T}_l} \frac{\cos(2\pi (r_{jk}-x_k))}{r_{jk}}\right|+\left|\sum_{\substack{l=1\\ l\neq i}}^{N_C}\sum_{k\in \mathcal{T}_l} \frac{\sin(2\pi (r_{jk}-x_k))}{r_{jk}}\right|
\end{align*}
We only upper bound the first term (cosine terms) in the equation above as follows (we can upper bound the second term (sine terms) in exactly the same fashion):
\begin{align}\nonumber
\left|\sum_{\substack{l=1\\ l\neq i}}^{N_C}\sum_{k\in \mathcal{T}_l} \frac{\cos(2\pi (r_{jk}-x_k))}{r_{jk}}\right| &\leq \left|\sum_{\substack{l=1\\ l\neq i}}^{N_C}\sum_{k\in \mathcal{T}_l} \frac{\cos(2\pi (r_{jk}-x_k))}{r_{jk}}-\EE\left(\frac{\cos(2\pi (r_{jk}-x_k))}{r_{jk}}\right)\right|+ \left|\sum_{\substack{l=1\\ l\neq i}}^{N_C}\sum_{k\in \mathcal{T}_l}\EE\left(\frac{\cos(2\pi (r_{jk}-x_k))}{r_{jk}}\right)\right|\\\nonumber &\leq \left|\sum_{\substack{l=1\\ l\neq i}}^{N_C}\sum_{k\in \mathcal{T}_l} \frac{\cos(2\pi (r_{jk}-x_k))}{r_{jk}}-\EE\left(\frac{\cos(2\pi (r_{jk}-x_k))}{r_{jk}}\right)\right|+ \sum_{\substack{l=1\\ l\neq i}}^{N_C}\left|\sum_{k\in \mathcal{T}_l}\EE\left(\frac{\cos(2\pi (r_{jk}-x_k))}{r_{jk}}\right)\right|\\ &\leq \left|\sum_{\substack{l=1\\ l\neq i}}^{N_C}\sum_{k\in \mathcal{T}_l} \frac{\cos(2\pi (r_{jk}-x_k))}{r_{jk}}-\EE\left(\frac{\cos(2\pi (r_{jk}-x_k))}{r_{jk}}\right)\right|+ \sum_{\substack{l=1\\ l\neq i}}^{N_C}\left|\sum_{k\in \mathcal{T}'_l}\EE\left(\frac{\cos(2\pi (r_{jk}-x_k))}{r_{jk}}\right)\right|\label{eq:Expectation_X1}
\end{align}
where $\mathcal{T}'_l$ denotes the $l$-th interfering transmit cluster that is at a vertical distance of $l \left(\frac{n^{\nu/4}}{2c_1}+c_2 n^{\nu/4+\epsilon}\right)$ from the desired receiving cluster 
$\mathcal{R}_i$. Let us first bound the second term of \eqref{eq:Expectation_X1}. Denote by $X^{(l)}_k=(\cos(2\pi (r_{jk}-x_k)))/(r_{jk})\,\,\forall k\in\mathcal{T}'_l$. Note that $X^{(l)}_k$'s are independent and identically distributed. For any $k\in\mathcal{T}'_l$, we have
$$
|r_{jk}| = r_{jk} = \sqrt{(x_k+x_j+d)^2+(y_j-y_k)^2} \ge d = \frac{n^{\nu/2}}{4}$$
is a $C^2$ function and 
\begin{align*}
|r_{jk}'(y_k)|&=\left|\frac{\partial\,r_{jk} }{\partial{y_k}}\right| = \frac{|y_k-y_j|}{r_{jk}} \\
&\ge \frac{l \, c_2 \, n^{\nu/4+\epsilon}+(l-1) \, \frac{n^{\nu/4}}{2c_1}}{n^{\nu/2}}\\
&\ge {l \, c_2 \, n^{-\nu/4+\epsilon}}
\end{align*}
Moreover, $r_{jk}''$ changes sign at most twice. By the integration by parts formula, we obtain
\begin{align*}
\int_{{y_k}_0}^{{y_k}_1} \,dy_k \frac{\cos(2\pi  r_{jk})}{r_{jk}} & =
\int_{{y_k}_0}^{{y_k}_1} dy_k \, \frac{2 \pi  r'_{jk}}{2 \pi   r'_{jk}r_{jk}} \, \cos( 2\pi   r_{jk})\\
&=  \frac{ -\sin( 2 \pi  r_{jk})}{2 \pi   r'_{jk}r_{jk}} \bigg|_{{y_k}_0}^{{y_k}_1} + \frac{1}{2\pi }\int_{{y_k}_0}^{{y_k}_1} dy_k \, \frac{r_{jk}r''_{jk}+(r'_{jk})^2}{(r'_{jk}r_{jk})^2} \, \sin({ 2 \pi  r_{jk}})
\end{align*}
which in turn yields the upper bound
\begin{align*}
\left|\int_{{y_k}_0}^{{y_k}_1} dy_k \, \frac{\cos({2 \pi  r_{jk}})}{r_{jk}} \right|&\le \frac{1}{2 \pi} \, \Bigg( \frac{2}{\min_{y_k}\{|r'_{jk}||r_{jk}|\}} 
+ \int_{{y_k}_0}^{{y_k}_1} dy_k \, \frac{|r''_{jk}|}{(r'_{jk})^2|r_{jk}|} + \int_{{y_k}_0}^{{y_k}_1} dy_k \, \frac{1}{r_{jk}^2} \Bigg)\\
&\le \frac{1}{2 \pi} \, \Bigg( \frac{4}{l \, c_2 \, n^{\nu/4+\epsilon}} 
+ \frac{1}{\min_{y_k}\{|r_{jk}|\}}\int_{{y_k}_0}^{{y_k}_1} dy_k \, \frac{|r''_{jk}|}{(r'_{jk})^2} + \frac{|{{y_k}_1}-{{y_k}_0}|}{\min_{y_k}\{r_{jk}^2\}} \Bigg)\\
&\le \frac{1}{2 \pi} \, \Bigg( \frac{4}{l \, c_2 \, n^{\nu/4+\epsilon}} 
+ \frac{4}{l \, c_2 \, n^{\nu/4+\epsilon}} + \frac{2}{n^{3\nu/4}} \Bigg)
\le \frac{9/(2\pi)}{l \, c_2 \, n^{\nu/4+\epsilon}}.~~ \text{(take $\nu>2\epsilon$)}
\end{align*}
Therefore, for any $k\in\mathcal{T}'_l$,
\begin{align}\nonumber
\bigg|\mathbb{E}\left(X^{(l)}_k\right)\bigg|&= \left|\frac{4}{n^{\nu/2}}\int_0^{\frac{n^{\nu/2}}{4}}\,dx_k\frac{1}{|{y_k}_1-{y_k}_0|}\int_{{y_k}_0}^{{y_k}_1} dy_k \, \frac{\cos({2 \pi  r_{jk}})}{r_{jk}} \right|\\\nonumber
&\le \frac{4}{n^{\nu/2} \, |{y_k}_1-{y_k}_0|}\int_0^{\frac{n^{\nu/2}}{4}}\,dx_k \left|\int_{{y_k}_0}^{{y_k}_1} dy_k \, \frac{\cos({2 \pi  r_{jk}})}{r_{jk}} \right|\\
&\le \frac{9/(2\pi)}{|{y_k}_1-{y_k}_0| \, l \, c_2 \, n^{\nu/4+\epsilon}}
\le \frac{9c_1}{\pi c_2}\frac{1}{l \, n^{\nu/2+\epsilon}}=\frac{9c_1}{\pi c_2} \frac{1}{l \, d \, n^{\epsilon}}.\label{eq:Exp_1}
\end{align}
We further upper bound the first term in \eqref{eq:Expectation_X1} by using the Hoeffding's inequality \cite{Hoeffding}. Denote by $X_k=\frac{\cos({2 \pi  r_{jk}})}{r_{jk}}$, where $1\le k\le N_C Mn^{1-\nu}=\Theta\left(n^{1-\epsilon}\right)$. Note that $X_k$'s  are i.i.d. and integrable random variables that represent all nodes in all the interfering clusters. In other words, we have
$$
\left|\sum_{\substack{l=1\\ l\neq i}}^{N_C}\sum_{k\in \mathcal{T}_l} \frac{\cos(2\pi (r_{jk}-x_k))}{r_{jk}}-\EE\left(\frac{\cos(2\pi (r_{jk}-x_k))}{r_{jk}}\right)\right|=\left|\sum_{k=1}^{n^{1-\epsilon}} \left(X_k-\mathbb{E}(X_k)\right)\right|.
$$
We have $X_k \in [-1/d,1/d]$. As such, Hoeffding's inequality yields
\begin{align*}
\mathbb{P}\left(\frac{1}{n^{1-\epsilon}}\left|\sum_{k=1}^{n^{1-\epsilon}} \left(X_k-\mathbb{E} \left(X_k\right)\right)\right|>t\right)&\leq 2\, \exp\left(-\frac{n^{1-\epsilon} \, t^2}{2/d^2}\right)\\
&=2\,\exp\left(-\frac{1}{2}n^{1-\epsilon} \, d^2 \, t^2\right)\\
&\overset{(a)}{=}2\exp(-n^{\epsilon_1}),
\end{align*}  
where $(a)$ is true if $t=\frac{\sqrt{2n^{\epsilon+\epsilon_1-1}}}{d}$. Therefore, we have
\begin{align}\label{eq:Exp_2}
\left|\sum_{k=1}^{n^{1-\epsilon}} \left(X_k-\mathbb{E}(X_k)\right)\right| &\leq n^{1-\epsilon}t =\frac{\sqrt{2n^{1-\epsilon+\epsilon_1}}}{d}
\end{align}
with probability $\ge 1-2\exp(-n^{\epsilon_1})$. Combining \eqref{eq:Exp_1} and \eqref{eq:Exp_2}, we can upper bound \eqref{eq:Expectation_X1} as follows
\begin{align*}
\left|\sum_{\substack{l=1\\ l\neq i}}^{N_C}\sum_{k\in \mathcal{T}_l} \frac{\cos(2\pi (r_{jk}-x_k))}{r_{jk}}\right| &\le \frac{\sqrt{2n^{1-\epsilon+\epsilon_1}}}{d} + \sum_{l=1}^{N_C}\frac{9c_1}{\pi c_2} \frac{Mn^{1-\nu}}{l \, d \, n^{\epsilon}}\\
&\le \frac{\sqrt{2n^{1-\epsilon+\epsilon_1}}}{d} + \frac{9c_1}{\pi c_2} \frac{Mn^{1-\nu}}{d \, n^{\epsilon}}\log n.
\end{align*}
Note that for $M=\Theta\left(n^{3\nu/4}\right)$ and $\nu\leq 2-(\epsilon+\epsilon_1)$, we have
$$
\frac{\sqrt{2n^{1-\epsilon+\epsilon_1}}}{d} \le \frac{9c_1}{\pi c_2} \frac{Mn^{1-\nu}}{d \, n^{\epsilon}}\log n
$$
Finally, upper bounding the sine terms in the same fashion, we obtain
$$
\left|\sum_{\substack{l=1\\ l\neq i}}^{N_C}\sum_{k\in \mathcal{T}_l} \frac{\exp(2\pi i (r_{jk}-x_k))}{r_{jk}}\right|=O\left(\frac{Mn^{1-\nu}}{d \, n^{\epsilon}}\log n\right)
$$
with high probability (more precisely, with probability $\ge 1-4\, \exp(-n^{\epsilon_1})$), which concludes the proof.
\end{proof}

\begin{proof}[Proof of Lemma \ref{lem:Gersgorin}]
- Let us first consider the case where $B$ is a Hermitian and positive semi-definite matrix. Then $\Vert B \Vert=\lambda_{\max}(B)$, the largest eigenvalue of $B$. Let now $\lambda$ be an eigenvalue of $B$ and $u$ be its corresponding eigenvector, so that $\lambda u = Bu$. Using the block representation of the matrix $B$, we have
$$
\lambda \, u_j = \sum_{k=1}^K B_{jk} \, u_k, \quad \forall 1 \le j \le K
$$
where $u_j$ is the $j^{th}$ block of the vector $u$. Let now $j$ be such that $\Vert u_j \Vert = \max_{1 \le k \le K} \Vert u_k \Vert$. Taking norms and using the triangle inequality, we obtain
\begin{align*}
|\lambda| \, \Vert u_j \Vert & = \left\Vert \sum_{k=1}^K B_{jk} \, u_k \right\Vert \le \sum_{k=1}^K \Vert B_{jk} \, u_k \Vert\\
& \le \sum_{k=1}^K \Vert B_{jk} \Vert \, \Vert u_k \Vert \le \sum_{k=1}^K \Vert B_{jk} \Vert \, \Vert u_j \Vert
\end{align*}
by the assumption made above. As $u \not\equiv 0$, $\Vert u_j \Vert >0$, so we obtain
$$
|\lambda| \le \max_{1 \le j \le K} \sum_{k=1}^K \Vert B_{jk} \Vert
$$
As this inequality applies to any eigenvalue $\lambda$ of $B$ and $\Vert B \Vert=\lambda_{\max}(B)$, the claim is proved in this case.

- In the general case, observe first that $\Vert B \Vert^2=\lambda_{\max}(BB^{\dagger})$, where $BB^{\dagger}$ is Hermitian and positive semi-definite. So by what was just proved above,
$$
\Vert B \Vert^2 = \lambda_{\max}(BB^{\dagger}) \le \max_{1 \le j \le K} \sum_{k=1}^K \Vert (BB^{\dagger})_{jk} \Vert
$$
Now, $(BB^{\dagger})_{jk} = \sum_{l=1}^K B_{jl} B_{kl}^{\dagger}$ so
\begin{align*}
& \sum_{k=1}^K \Vert (BB^{\dagger})_{jk} \Vert = \sum_{k=1}^K \left\Vert \sum_{l=1}^K B_{jl} B_{kl}^{\dagger} \right\Vert\\
& \le \sum_{k=1}^K \sum_{l=1}^K \Vert B_{jl} \Vert \, \Vert B_{kl} \Vert \le \sum_{l=1}^K \Vert B_{jl} \Vert \, \max_{1 \le j \le K} \sum_{k=1}^K \Vert B_{kj} \Vert 
\end{align*}
and we finally obtain
$$
\Vert B \Vert^2 \le \left( \max_{1 \le j \le K} \sum_{l=1}^K \Vert B_{jl} \Vert  \right) \,  \left( \max_{1 \le j \le K} \sum_{k=1}^K \Vert B_{kj} \Vert \right)
$$
which implies the result, as $ab \le \max\{a,b\}^2$ for any two positive numbers $a,b$.
\end{proof}
\begin{figure}
\centering
\includegraphics[scale=1]{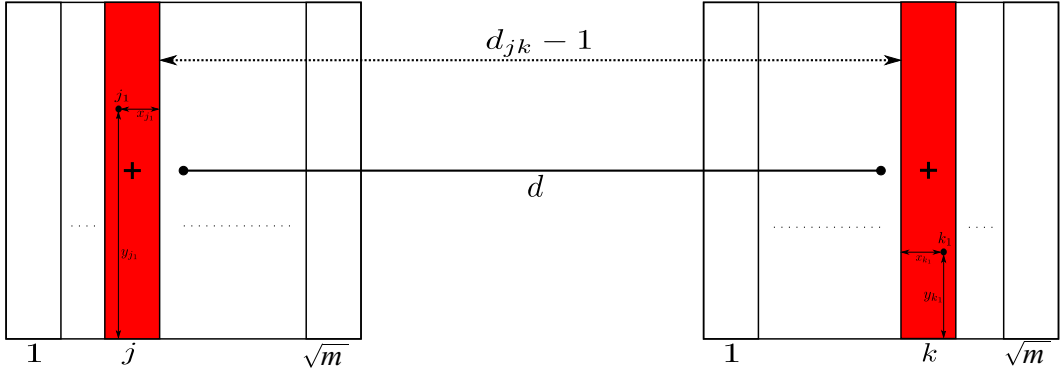}
\caption{Two square clusters with $A=m$ that have a center-to-center distance $d$, with each cluster decomposed into  $\sqrt{m}$ vertical $\sqrt{A}\times 1$ rectangles. $d_{jk}$ is distance between the centers (marked with cross) of the two rectangles $j$ and $k$. Moreover, we have the points $j_1(x_{j_1},y_{j_1})$ and $k_1(x_{k_1},y_{k_1})$ in the rectangles $j$ and $k$, respectively.}
\label{fig:4}
\end{figure}
\begin{proof}[Proof of Lemma \ref{lem:blockNorm}]
Most of the ingredients of the proof come from the proof of the particular case of $A=m$ ($\nu=1$) presented in \cite{HLarxiv1}. In the case of $\nu=1$, the strategy was essentially the following: in order to bound $\Vert \widehat{H} \Vert$, we divide the matrix into smaller blocks, bound the smaller blocks $\Vert {\widehat{H}}_{jk} \Vert$, and apply Lemma \ref{lem:Gersgorin}. We decompose each of the two square clusters into $\sqrt{m}$ vertical $\sqrt{A}\times 1$ rectangles of $\sqrt{m}$ nodes each (See Fig. \ref{fig:4}).

By Lemma \ref{lem:Gersgorin}, we obtain
\begin{equation} \label{eq:ub_H0}
\Vert \widehat{H} \Vert \le \max \left\{ \max_{1 \le j \le \sqrt{m}} \sum_{k=1}^{\sqrt{m}} \Vert \widehat{H}_{jk} \Vert, \max_{1 \le j \le \sqrt{m}} \sum_{k=1}^{\sqrt{m}} \Vert \widehat{H}_{kj} \Vert \right\}
\end{equation}
where the $m \times m$ matrix $\widehat{H}$ is decomposed into blocks $\widehat{H}_{jk}$, $j,k=1,\ldots,\sqrt{m}$, with $\widehat{H}_{jk}$ denoting the $\sqrt{m} \times \sqrt{m}$ channel matrix between $k$-th rectangle of the transmitting cluster and the $j$-th rectangle of the receiving cluster. As shown in Fig. \ref{fig:4}, $d_{jk}$ denotes the corresponding inter-rectangle distance, measured from the centers of the two rectangles. In \cite{HLarxiv1}, it is shown that for $2\sqrt{A}\le d \le A$, where $d$ is the distance between the centers of the two clusters, there exist constants $c,c'>0$ such that
\begin{equation}\label{eq:up_H0jk}
\Vert \widehat{H}_{jk} \Vert^2 \le c' \, \frac{m^{1+\epsilon}}{A\, d_{jk}} \le c \, \frac{m^{1+\epsilon}}{A\, d}
\end{equation}
with high probability as $m \to \infty$. Applying \eqref{eq:ub_H0} and \eqref{eq:up_H0jk}, 
$$
\Vert \widehat{H} \Vert \le \max \left\{ \max_{1 \le j \le \sqrt{m}} \sum_{k=1}^{\sqrt{m}} \Vert \widehat{H}_{jk} \Vert, \max_{1 \le j \le \sqrt{m}} \sum_{k=1}^{\sqrt{m}} \Vert \widehat{H}_{kj} \Vert \right\} \le \left(c\,\frac{m^{2+\epsilon}}{A\, d}\right)^{1/2}.
$$
Therefore, for $\nu=1$ we already have the desired upper bound in \cite{HLarxiv1}. Moreover, to prove the inequality \eqref{eq:up_H0jk}, the authors in \cite{HLarxiv1} use the moments' method, relying on the following inequality:
\begin{align*}
\Vert \widehat{H}_{jk} \Vert^2  & = \lambda_{\max}(\widehat{H}_{jk}\widehat{H}_{jk}^{\dagger}) \le \left( \sum_{k=1}^M (\lambda_k(\widehat{H}_{jk} \widehat{H}_{jk}^{\dagger}))^\ell \right)^{1/\ell}\\
& = \left( \Tr \left((\widehat{H}_{jk} \widehat{H}_{jk}^{\dagger})^\ell \right) \right)^{1/\ell}
\end{align*}
valid for any $\ell \ge 1$. So by Jensen's  inequality, they obtain that $\EE( \Vert \widehat{H}_{jk} \Vert^2) \le \left( \EE( \mathrm{Tr}((\widehat{H}_{jk}\widehat{H}_{jk}^{\dagger})^\ell) ) \right)^{1/\ell}$. Finally, they show that taking $\ell \to \infty$ leads to $\EE( \Vert \widehat{H}_{jk} \Vert^2) \le c \, \frac{\log M}{d_{jk}}$, by precisely showing that 
\begin{equation}\label{eq:final}
\EE( \mathrm{Tr}((\widehat{H}_{jk}\widehat{H}_{jk}^{\dagger})^\ell)= O\left( (\sqrt{m})^{2l} S_l\right) = O\left( \frac{m^{\ell}(c\,\log A)^{\ell-1}}{A^{\ell-1}d_{jk}^{\ell+1}}\right),
\end{equation}
where 
\begin{equation}\label{S_l}
S_{\ell} =|\EE(f_{j_1k_1}f_{j_2k_1}^*\ldots f_{j_{\ell}k_{\ell}}f_{j_1k_{\ell}}^*)|=O\left(\frac{(c\,\log A)^{\ell-1}}{A^{\ell-1}d_{jk}^{\ell+1}}\right),
\end{equation}
with $j_1\neq \ldots \neq j_{\ell}$ and $k_1\neq \ldots \neq k_{\ell}$. Note that $S_l$ does not depend on the particular choice of  $j_1\neq \ldots \neq j_{\ell}$ and $k_1\neq \ldots \neq k_{\ell}$.
This finally implies
\begin{equation*}
\left( \EE( \mathrm{Tr}((\widehat{H}_{jk}\widehat{H}_{jk}^{\dagger})^\ell) ) \right)^{1/\ell} \le \frac{m\,(c\,\log A)^{1-{1/\ell}}}{A^{1-1/{\ell}}d_{jk}^{1+{1/\ell}}} \underset{\ell\rightarrow\infty}{\rightarrow} c\,\frac{m\log A}{A\, d_{jk}}. 
\end{equation*}
The last step includes applying Markov's inequality to get
\begin{align*}
\mathbb{P}\left(\lambda_{\max}(\widehat{H}_{jk}\widehat{H}_{jk}^{\dagger})\geq 
c' \frac{m^{1+\epsilon}}{A\, d_{jk}}\right)&\leq \frac{\EE((\lambda_{\max}(\widehat{H}_{jk}\widehat{H}_{jk}^{\dagger}))^\ell)}
{(c' m^{1+\epsilon}/(A\, d_{jk}))^{\ell}}\\
&\le \frac{A\,(\log A)^{{\ell}-1}}{d_{jk}\, m^{\epsilon\ell}}
\end{align*}
which, for any fixed $\epsilon > 0$, can be made arbitrarily small by taking $\ell$ sufficiently large.

To extend this result for any $\nu>0$, we reuse the upper bound obtained in \cite{HLarxiv1} on $S_{\ell}$. We can show that the upper bound on $S_{\ell}$ is also applied to the case where the $\ell$ points move in a square of area $A$ instead of $\sqrt{A}\times 1$ rectangle. However, we omit this small technical issue to emphasize on the main result. Therefore, from now on $S_{\ell}$ assumes that $\ell$ the points corresponding to $j$'s  and $k$'s are randomly chosen in two squares of area $A$ apart by a distance $d$.

After sketching the proof in \cite{HLarxiv1} for the particular case $\nu=1$, we use the same approach to prove the given Lemma. For $\nu>0$, consider the $m\times m$ channel matrix $\widehat{H}$ between two square clusters of $m$ nodes distributed uniformly at random each of area $A=m^{\nu}$ and separated by distance $2 \sqrt{A} \le d \le A$. 

We have the following inequalities:

For the first moment, we have
$$
\EE\left(\mathrm{Tr}\left(\widehat{H}\widehat{H}^{\dagger}\right)\right)=\sum_{j_1,k_1=1}^{m}\EE(\hat{h}_{j_1k_1}\hat{h}_{j_1k_1}^*) =\sum_{j_1,k_1=1}^{m}\EE(|\hat{h}_{j_1k_1}|^2) =\sum_{j_1,k_1=1}^{m}\frac{1}{r_{j_1k_1}^2}=O\left(\frac{m^2}{d^2}\right).
$$

For the second moment, we have
\begin{align*}
\EE(\mathrm{Tr}((\widehat{H}\widehat{H}^{\dagger})^2))
&=\sum_{\substack{j_1,j_2,k_1,k_2=1}}^{m} \EE(\hat{h}_{j_1k_1}\hat{h}_{j_2k_1}^*\hat{h}_{j_2k_2}\hat{h}_{j_1k_2}^*) \\
&\le\sum_{\substack{j_1=j_2\\k_1,k_2}} \EE(\hat{h}_{j_1k_1}\hat{h}_{j_2k_1}^*\hat{h}_{j_2k_2}\hat{h}_{j_1k_2}^*) +\sum_{\substack{j_1, j_2\\k_1=k_2}} \EE(\hat{h}_{j_1k_1}\hat{h}_{j_2k_1}^*\hat{h}_{j_2k_2}\hat{h}_{j_1k_2}^*) +\sum_{\substack{j_1\neq j_2\\k_1\neq k_2}} \EE(\hat{h}_{j_1k_1}\hat{h}_{j_2k_1}^*\hat{h}_{j_2k_2}\hat{h}_{j_1k_2}^*)\\
&\leq 2\frac{m^3}{d^4} + m^4 S_2 \overset{(a)}{\le} 2\frac{m^3}{d^4} + m^4 \frac{\log A}{A\, d^3}=O\left(\max\left\{ \frac{m^3}{d^4} , \frac{m^4 \, \log A}{A\, d^3}\right\}\right).
\end{align*}

As in \cite{HLarxiv1}, it can be shown that 
\begin{align*}
\EE(\mathrm{Tr}((\widehat{H}\widehat{H}^{\dagger})^{\ell}))&\leq \frac{2\,\ell}{m} \sum_{t=1}^{\lfloor \ell/2\rfloor}
\EE(\mathrm{Tr}((\widehat{H}\widehat{H}^{\dagger})^{t}))\,\EE(\mathrm{Tr}((\widehat{H}\widehat{H}^{\dagger})^{\ell-t})) + m^{2\, \ell} S_{\ell}.
\end{align*}

Using the above inequality, it can be shown that
$$
\EE(\mathrm{Tr}((\widehat{H}\widehat{H}^{\dagger})^{\ell}))=O\left(\max\left\{ \frac{m^{\ell+1}}{d^{2\, \ell}} , \frac{m^{2\, \ell} \, (\log A)^{\ell-1}}{A^{\ell-1}\, d^{\ell+1}}\right\}\right).
$$
Applying the Markov's inequality as above, concludes the proof.
\begin{figure}
\centering
\includegraphics[scale=0.55]{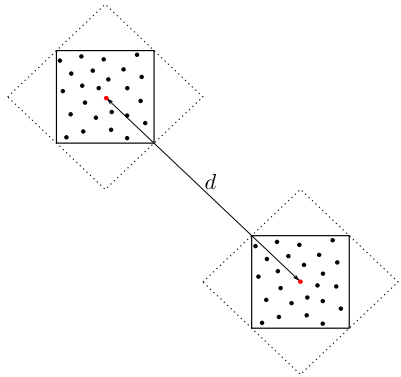}
\caption{Two tilted square clusters that have a center-to-center distance $d$. We can draw larger squares (drawn in dotted line) containing the original clusters with the same centers that are aligned.}
\label{fig:5}
\end{figure}

A last remark is that we proved lemma \ref{lem:blockNorm} for aligned clusters. However, the proof can be easily generalized to tilted clusters, as shown in Fig. \ref{fig:5}. We can always draw a larger cluster containing the original cluster and having the same center. The larger cluster can at most contain twice as many nodes as the original cluster. The large clusters are now aligned. Moreover, the distance $d$ from the centers of the two newly created large clusters still satisfies the required condition ($2\sqrt{A}\le d \le A$). 
\end{proof} 

%\bibliographystyle{IEEEtran}
%\bibliography{BIBfile}

\bibliographystyle{IEEEtran}
\bibliography{BIBfile}

% Generated by IEEEtran.bst, version: 1.13 (2008/09/30)
\begin{thebibliography}{10}
\providecommand{\url}[1]{#1}
\csname url@samestyle\endcsname
\providecommand{\newblock}{\relax}
\providecommand{\bibinfo}[2]{#2}
\providecommand{\BIBentrySTDinterwordspacing}{\spaceskip=0pt\relax}
\providecommand{\BIBentryALTinterwordstretchfactor}{4}
\providecommand{\BIBentryALTinterwordspacing}{\spaceskip=\fontdimen2\font plus
\BIBentryALTinterwordstretchfactor\fontdimen3\font minus
  \fontdimen4\font\relax}
\providecommand{\BIBforeignlanguage}[2]{{%
\expandafter\ifx\csname l@#1\endcsname\relax
\typeout{** WARNING: IEEEtran.bst: No hyphenation pattern has been}%
\typeout{** loaded for the language `#1'. Using the pattern for}%
\typeout{** the default language instead.}%
\else
\language=\csname l@#1\endcsname
\fi
#2}}
\providecommand{\BIBdecl}{\relax}
\BIBdecl

\bibitem{GuptaUnicast}
P.~Gupta and P.~R. Kumar, ``The capacity of wireless networks,'' \emph{{IEEE}
  Trans. Inform. Theory}, vol.~46, no.~2, pp. 388--404, March 2000.

\bibitem{AyferUnicast}
A.~\"{O}zgur, O.~L\'ev\^eque, and D.~N.~C. Tse, ``Hierarchical cooperation
  achieves optimal capacity scaling in ad hoc networks,'' \emph{{IEEE} Trans.
  Inform. Theory}, vol.~53, no.~10, pp. 3549--3572, October 2007.

\bibitem{ADT11}
A.~Avestimehr, S.~Diggavi, and D.~Tse, ``Wireless network information flow: A
  deterministic approach,'' \emph{{IEEE} Trans. Inform. Theory}, vol.~57,
  no.~4, pp. 1872--1905, April 2011.

\bibitem{NGS09}
U.~Niesen, P.~Gupta, and D.~Shah, ``On capacity scaling in arbitrary wireless
  networks,'' \emph{{IEEE} Trans. Inform. Theory}, vol.~55, no.~9, pp.
  3959--3982, September 2009.

\bibitem{CJ08}
V.~Cadambe and S.~Jafar, ``Interference alignment and degrees of freedom of the
  k-user interference channel,'' \emph{{IEEE} Trans. Inform. Theory}, vol.~54,
  no.~8, pp. 3425--3441, August 2008.

\bibitem{RCY00}
R.~Ahlswede, N.~Cai, S.~Li, and R.~Yeung, ``Network information flow,''
  \emph{{IEEE} Trans. Inform. Theory}, vol.~46, no.~4, pp. 1204--1216, July
  2000.

\bibitem{NGJV12}
B.~Nazer, M.~Gastpar, S.~Jafar, and S.~Vishwanath, ``Ergodic interference
  alignment,'' \emph{{IEEE} Trans. Inform. Theory}, vol.~58, no.~10, pp.
  6355--6371, October 2012.

\bibitem{MOMK14}
A.~Motahari, S.~Oveis-Gharan, M.-A. Maddah-Ali, and A.~Khandani, ``Real
  interference alignment: Exploiting the potential of single antenna systems,''
  \emph{{IEEE} Trans. Inform. Theory}, vol.~60, no.~8, pp. 4799--4810, August
  2014.

\bibitem{FMM09}
M.~Franceschetti, M.~Migliore, and P.~Minero, ``The capacity of wireless
  networks: Information-theoretic and physical limits,'' \emph{{IEEE} Trans.
  Inform. Theory}, vol.~55, no.~8, pp. 3413--3424, August 2009.

\bibitem{GastparBC}
B.~Sirkeci-Mergen and M.~Gaspar, ``On the broadcast capacity of wireless
  networks,'' \emph{{IEEE} Trans. Inform. Theory}, vol.~56, no.~8, pp.
  3847--3861, August 2010.

\bibitem{JSAC}
A.~\"{O}zgur, O.~L\'ev\^eque, and D.~N.~C. Tse, ``Spatial degrees of freedom of
  large distributed mimo systems and wireless ad hoc networks,'' \emph{{IEEE}
  Journ, on Selected Areas in Communications}, vol.~31, no.~2, pp. 202--214,
  February 2013.

\bibitem{HLarxiv1}
S.~Haddad and O.~L\'ev\^eque, ``On the broadcast capacity of large wireless
  networks at low snr,'' \emph{Corr}, vol. abs/1509.05856, 2015, [online]
  Available: http://arxiv.org/pdf/1509.05856.pdf.

\bibitem{HL15}
------, ``On the broadcast capacity of large wireless networks at low snr,''
  \emph{Proceedings of the {IEEE} International Symposium on Information
  Theory}, pp. 171--175, June 2015.

\bibitem{AllaThesis}
A.~Merzakreeva, ``Cooperation in space-limited wireless networks at low snr,''
  Ph.D. dissertation, EPFL, Switzerland, 2014.

\bibitem{Hoeffding}
W.~Hoeffding, ``Probability inequalities for sums of bounded random
  variables,'' \emph{Journal of the American Statistical Association}, vol.~58,
  no. 301, pp. 13--30, 1963.

\end{thebibliography}

\end{document}